\newenvironment{proof}{\noindent {\bf Proof:  }}{\hfill\rule{2mm}{2mm}}
\newenvironment{proofof}[1]{\noindent{\bf Proof of #1:  }}{\hfill\rule{2mm}{2mm}}
\newtheorem{claim}{Claim}[section]
\newtheorem{fact}{Fact}[section]
\newtheorem{definition}{Definition}[section]
\newtheorem{theorem}{Theorem}[section]
\newtheorem{lemma}{Lemma}[section]
\newcommand{\opt}{\mathsf{OPT}}
\newcommand{\clp}{\mathsf{CLP}}
\newcommand{\block}{\mathsf{blocking}}
\newcommand*\samethanks[1][\value{footnote}]{\footnotemark[#1]}
\title{On ($1$, $\epsilon$)-Restricted Max-Min Fair Allocation Problem}
\author{T-H. Hubert Chan\thanks{Department of Computer Science, the University of Hong Kong. {\texttt{\{hubert,zhtang,xwwu\}@cs.hku.hk}}} \and Zhihao Gavin Tang\samethanks
	\and Xiaowei Wu\samethanks}
\date{}
\begin{document}

\maketitle

\begin{abstract}
	We study the max-min fair allocation problem in which a set of $m$ indivisible items are to be distributed among $n$ agents such that the minimum utility among all agents is maximized.
	In the restricted setting, the utility of each item $j$ on agent $i$ is either $0$ or some non-negative weight $w_j$.
	For this setting, Asadpour et al.~\cite{talg/AsadpourFS12} showed that a certain configuration-LP can be used to estimate the optimal value within a factor of $4+\delta$, for any $\delta>0$, which was recently extended by Annamalai et al.~\cite{soda/AnnamalaiKS15} to give a
	polynomial-time $13$-approximation algorithm for the problem.
	For hardness results, Bez{\'{a}}kov{\'{a}} and Dani~\cite{sigecom/BezakovaD05} showed that it is \NP-hard to approximate the problem within any ratio smaller than $2$. 
	
	In this paper we consider the $(1,\epsilon)$-restricted max-min fair allocation problem in which each item $j$ is either heavy $(w_j = 1)$ or light $(w_j = \epsilon)$, for some parameter $\epsilon \in (0,1)$.
We show that the $(1,\epsilon)$-restricted case is also \NP-hard to approximate within any ratio smaller than $2$. 
Hence, this simple special case is still algorithmically interesting.

Using the configuration-LP, we are able to estimate the optimal value of the problem within a factor of $3+\delta$, for any $\delta>0$. Extending this idea, we also obtain a quasi-polynomial time $(3+4\epsilon)$-approximation algorithm and a polynomial time $9$-approximation algorithm. Moreover, we show that as $\epsilon$ tends to $0$, the approximation ratio of our polynomial-time algorithm approaches $3+2\sqrt{2}\approx 5.83$.
\end{abstract}

\section{Introduction}\label{sec:intro}

We consider in this paper the \emph{Max-Min Fair Allocation problem}.
A problem instance is defined by $(A,B,w)$, where $A$ is a set of $n$ agents, $B$ is a set of $m$ items and the utility of each item $j\in B$ perceived by agent $i \in A$ has weight $w_{i j}$.
An allocation of items to agents is $\sigma:B\rightarrow A$ such that $\sigma(j)=i$ iff item $j$ is assigned to agent $i$.
The max-min fair allocation problem aims at finding an allocation such that the minimum total weight received by an agent $\min_{i\in A}\sum_{j\in \sigma^{-1}(i)} w_{i j}$ is maximized.
The problem is also known as the Santa Claus Problem~\cite{stoc/BansalS06}.
In the restricted version of the problem, it is assumed that each item $j$ has a fixed weight $w_j$ such that for each $i\in A$ and $j\in B$, $w_{i j} \in \{ 0,w_j \}$, i.e.,
if an agent has non-zero utility for an item $j$, the utility is $w_j$.
We focus on this paper the restricted version of the problem (restricted allocation problem) and refer to the problem with general weights the \emph{unrestricted} allocation problem.
For the restricted allocation problem, let $B_i = \{ j\in B: w_{i j}>0 \}$ be the set of items agent $i$ is interested in.
For a collection of items $S\subseteq B$, let $w(S) = \sum_{j\in S}w_j$.

The problem can be naturally formulated as an integer program, with variable $x_{ij}$ for each $i\in A$ and $j\in B$ indicating whether item $j$ is assigned to agent $i$. Its linear program relaxation \emph{Assignment-LP} (ALP) is shown as below.
\begin{align*}
\max \qquad & T\\
\text{s.t.}\qquad  \sum_{j\in B_i} x_{ij} w_{j} &\geq T, \qquad \forall i\in A \\
 \sum_{i\in A} x_{ij} &\leq 1, \qquad \forall j\in B \\
x_{ij} &\geq 0, \qquad \forall i\in A, j\in B.
\end{align*}

Let $\opt$ be the maximum value of the restricted allocation problem such that in the optimal allocation, every agent is assigned a set of items with total weight at least $\opt$.
Bez{\'{a}}kov{\'{a}} and Dani~\cite{sigecom/BezakovaD05} showed that any feasible solution $x$ and $T$ for the ALP can be rounded into an allocation such that every agent $i$ receives at least $T - \max_{j\in B_i}w_j$ total value, which implies $\opt \geq T^* - \max_{j\in B}w_j$, where $T^*$ is the optimal value of the ALP.
However, the above result does not yield any guarantee on the integrality gap.
Actually, it can be easily shown that the integrality gap of ALP is unbounded since it is possible to have a feasible solution with $T>0$ while $\opt = 0$ (i.e., when $|B|<|A|$).
It was shown in~\cite{sigecom/BezakovaD05} that it is \NP-hard to approximate the problem within any ratio smaller than $2$ by a reduction from $3$-dimensional matching.

To overcome the limitation of ALP, a stronger linear program called \emph{Configuration-LP} (CLP) was proposed by Bansal and Sviridenko~\cite{stoc/BansalS06}, in which an $O(\frac{\log n}{\log\log n})$-approximation algorithm was obtained for the restricted allocation problem.
For any $T>0$, we call an allocation a $T$-allocation if it assigns to every agent a set of items with total weight at least~$T$. 

\begin{definition}[Bundles with Sufficient Utility]
	For all $i\in A$, the collection of bundles
	with utility at least $T$ for agent $i$ is
	$C(i,T) := \{ S\subseteq B_i: w(S)\geq T \}$.
\end{definition}

The CLP is a feasibility LP associated with $T$ indicating whether it is possible to (fractionally) assign to each agent one unit of bundle with sufficient utility. 
The LP ($\clp(T)$) and its dual are shown as follows.

\begin{minipage}[h]{0.49\textwidth}
	\centering
	\begin{align*}
	\textbf{Primal }\quad&\min \quad  0\\
	\text{s.t.}\quad \sum_{S\in C(i,T)} x_{i,S} &\geq 1, \quad \forall i\in A \\
	\sum_{i,S:j\in S\in C(i,T)} x_{i,S} &\leq 1, \quad \forall j\in B \\
	x_{i,S} &\geq 0, \quad \forall i\in A, S\in C(i,T).
	\end{align*}
\end{minipage}
\begin{minipage}[h]{0.49\textwidth}
	\begin{align*}
	\textbf{Dual }\quad\max& \quad \sum_{i\in A}y_i - \sum_{j\in B}z_j\\
	\text{s.t.}\quad y_i \leq \sum_{j\in S}z_j&, \quad \forall i\in A, S\in C(i,T) \\
	y_i \geq 0&,\quad \forall i\in A\\
	z_j \geq 0&,\quad \forall j\in B.
	\end{align*}
\end{minipage}
\vspace*{10pt}

Although $\clp(T)$ has an exponential number of variables, it is claimed in~\cite{stoc/BansalS06} that the separation problem for the dual LP is the minimum knapsack problem: given a candidate dual solution $(y, z)$, a violated constraint can be identified by finding an agent $i$ and a configuration $S\in C(i,T)$ such that $y_i > \sum_{j\in S} z_j$.
Hence, we can solve $\clp(T)$ to any desired precision.
Note that any feasible solution ${x}$ of $\clp(T)$ induces a feasible solution $\hat{x}$ for the ALP by setting $\hat{x}_{ij} = \sum_{S:j\in S\in C(i,T)} x_{i,S}\leq 1$ for all $i\in A$ and $j\in B$.

\begin{definition}[Integrality Gap]
	Let $T^*$ be the maximum value such that $\clp(T^*)$ is feasible. The ratio $\frac{T^*}{\opt}$ is known as the \emph{integrality gap}.
\end{definition}

Note that any upper bound $c$ for the integrality gap implies that we can estimate the optimal value of the problem within a factor of $c+\delta$, for any $\delta > 0$.
It is shown in~\cite{stoc/BansalS06} that the integrality gap of CLP for the unrestricted allocation problem is bounded by $O(\sqrt{n})$.
By repeatedly using the Lovasz Local Lemma, Uriel Feige~\cite{soda/Feige08} proved that the integrality gap of CLP for the restricted allocation problem is bounded by a constant.
The result was later turned into a constructive proof by Haeupler~\cite{focs/HaeuplerSS10}, who obtained the first constant approximation algorithm for the restricted allocation problem, although the constant is unspecified.
The integrality gap of CLP was later shown in~\cite{talg/AsadpourFS12} to be no larger than $4$ by a local search technique developed from Haxell~\cite{gc/Haxell95a} for finding perfect matchings in bipartite hypergraphs.
However, the algorithm is not guaranteed to terminate in polynomial time. 
%and hence do not imply a polynomial time $4$-approximation algorithm.
It is later shown by Polacek and Svensson~\cite{icalp/PolacekS12} that a simple modification of the local search algorithm can improve the running time from $2^{O(n)}$ to $n^{O(\log n)}$, which implies a quasi-polynomial $(4+\delta)$-approximation algorithm, for any $\delta>0$.
Very recently, Annamalai et al.~\cite{soda/AnnamalaiKS15} further extended the local search technique developed in~\cite{talg/AsadpourFS12, icalp/PolacekS12} for the restricted allocation problem and obtained a polynomial-time $13$-approximation algorithm for the problem.

\subsection{The $(1,\epsilon)$-Restricted Allocation Problem}

We consider in this paper the $(1,\epsilon)$-restricted allocation problem, in which each item $j\in B$ is either \emph{heavy} ($w_j = 1$) or \emph{light} ($w_j = \epsilon$, for some $\epsilon\in(0,1)$). As the simplest case of the allocation problem, the problem is not well understood.
The current best approximation results for the problem are for the restricted allocation problem.
Indeed, we believe that a better understanding of the $(1,\epsilon)$-restricted setting will shed light on improving the restricted (and even the unrestricted) allocation problem.

The $(1,\epsilon)$-restricted setting has been studied under different names.
Golovin~\cite{golovin2005max} studied the ``Big Goods/Small Goods'' max-min allocation problem, which is exactly the same as the problem we consider in this paper,  in which a small item has weight either $0$ or $1$ for each agent; a big item has weight either $0$ or $x>1$ for each agent.
They gave an $O(\sqrt{n})$-approximation algorithm for this problem and proved that it is \NP-hard to approximate the ``Big Goods/Small Goods'' max-min allocation problem within any ratio smaller than $2$ by giving a hard instance with $x=2$. We show in this paper that the inapproximability result holds for any fixed $x\geq 2$ by generalizing the hardness instance shown in~\cite{sigecom/BezakovaD05}.
Later Khot and Ponnuswami~\cite{approx/KhotP07} generalized the ``Big Goods/Small Goods'' setting and considered the $(0, 1, U)$-max-min allocation problem with sub-additive utility function in which the weight of an item to an agent is either $0$, $1$ or $U$ for some $U > 1$ and obtained an $\frac{n}{\alpha}$-approximation algorithm with $m^{O(1)}n^{O(\alpha)}$ running time, for any $\alpha\leq \frac{n}{2}$. Note that in their setting an item can have weight $1$ for an agent and $U$ for another.
In the seminal paper, Bansal and Sviridenko~\cite{stoc/BansalS06} obtained an $O(\frac{\log n}{\log\log n})$-approximation algorithm for the restricted allocation problem by first reducing the problem to the $(1,\epsilon)$-restricted case for an arbitrarily small $\epsilon>0$ while losing a constant factor on the approximation ratio, and then proving an $O(\frac{\log n}{\log\log n})$-approximation algorithm for the $(1,\epsilon)$-restricted case.

The max-min fair allocation problem is closely related to the problem of scheduling jobs on \emph{unrelated machines} to minimize \emph{makespan}, which we call the \emph{min-max allocation problem}.
The problem has the same input as the max-min fair allocation problem but aims at finding an allocation that minimizes $\max_{i\in A}\sum_{j\in \sigma^{-1}(i)} w_{i j}$.
Lenstra et al.~\cite{mp/LenstraST90} showed a $2$-approximation algorithm for the min-max allocation problem by rounding the ALP for the problem.
Applying the techniques developed for the max-min fair allocation problem, Svensson~\cite{stoc/Svensson11} gave a $\frac{5}{3}+\epsilon$ upper bound for the CLP's integrality gap of the $(1,\epsilon)$-restricted min-max allocation problem and then extended it to a $1.9412$ upper bound for the general case.
However, their algorithm is not known to converge in polynomial time.
Recently Chakrabarty et al.~\cite{soda/ChakrabartyKL15} obtained the first $(2-\delta)$-approximation algorithm for the $(1,\epsilon)$-restricted min-max allocation problem, for some constant $\delta>0$.
They considered the case when $\epsilon$ is close to $0$ since it is easy to obtain a $(2-\epsilon)$-approximation algorithm for the $(1,\epsilon)$-restricted min-max allocation problem.

Since the $(1,\epsilon)$-restriction is considered in the community to be interesting for the min-max setting, in this paper we consider this restriction for the max-min setting.

\subsection{Summary of Our Results}

We first show that we can slightly improve the hardness result of Golovin~\cite{golovin2005max} for the $(1,\epsilon)$-restricted allocation problem.
Note that in the unweighted case ($\epsilon = 1$), the problem can be solved in polynomial time by combining the max-flow computation between $A$ and $B$, with a binary search on the optimal value.
The above algorithm for the unweighted case actually provides a trivial $\frac{1}{\epsilon}$-approximation algorithm for the $(1,\epsilon)$-restricted allocation problem.
Hence, we have a polynomial-time algorithm with ratio $\frac{1}{\epsilon}<2$ for the problem when $\epsilon > 0.5$.

\begin{theorem}[Inapproximability]\label{th:hardness}
	For any $\epsilon\leq 0.5$, it is \NP-hard to approximate the $(1,\epsilon)$-restricted allocation problem within any ratio smaller than $2$.
\end{theorem}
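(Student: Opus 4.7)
The plan is to prove Theorem~\ref{th:hardness} by reducing from 3-Dimensional Matching (3DM), generalizing the hard instance of Bez\'{a}kov\'{a} and Dani~\cite{sigecom/BezakovaD05} to the $(1,\epsilon)$-restricted setting. Given a 3DM instance with sets $X,Y,Z$ of size $n$ and triples $T\subseteq X\times Y\times Z$, I would construct a $(1,\epsilon)$-restricted max-min fair allocation instance whose optimum is at least $2$ in the YES case (a perfect 3-matching exists) and at most $1$ in the NO case; this gap of $2$ then rules out every $c$-approximation algorithm with $c<2$ under $\mathrm{P}\neq\mathrm{NP}$.

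For the construction I would create one agent $a_e$ for each element $e\in X\cup Y\cup Z$ (plus auxiliary ``dummy'' agents to enforce the combinatorial structure if needed), one heavy item of weight $1$ per triple $t$ valued precisely by the three agents whose elements lie in $t$, and for each agent a private bundle of $k:=\lfloor 1/\epsilon\rfloor$ light items of weight $\epsilon$ that only that agent values. Because $\epsilon\leq 1/2$, we have $k\geq 2$ and $k\epsilon\leq 1$, so the private bundle alone sums to at most $1$, while a single heavy item on top of it reaches total weight at least $1+k\epsilon$. A small amount of additional padding would let me calibrate the YES-threshold to exactly $2$.

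The YES direction should then be routine: from a perfect matching $M\subseteq T$, assign each agent its private lights and distribute the heavy items so that every agent receives one (using triples in $M$ together with the remaining triples as needed), giving total weight at least $2$ per agent. The NO direction is where the work lies: I would argue that without a perfect 3-matching, every allocation leaves some agent with no heavy item, and hence with total weight at most $k\epsilon\leq 1$. This requires more than a raw bipartite matching argument between element agents and heavy items, since such a matching can saturate all $3n$ agents even for 3DM-negative instances. Following Bez\'{a}kov\'{a}--Dani, I would couple the heavy items through auxiliary dummy agents (and/or additional blocking items of weight $\epsilon$) so that a heavy item $h_t$ is usable by an element agent only in configurations that correspond to valid 3-matchings of the original 3DM instance.

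The main obstacle is engineering this coupling using only weights from $\{1,\epsilon\}$, and in verifying that the exact factor-$2$ gap between the YES and NO optima is preserved for \emph{every} $\epsilon\in(0,1/2]$ (not only the single value $\epsilon=1/2$ handled by Golovin~\cite{golovin2005max}). The hypothesis $\epsilon\leq 1/2$ is essentially sharp for the theorem: for $\epsilon>1/2$ the trivial polynomial-time algorithm that ignores item weights already achieves a $1/\epsilon$-approximation strictly below $2$, which rules out any $(2-\delta)$-hardness in that regime.
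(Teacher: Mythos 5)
Your high-level plan (reduce from 3DM, generalize Bez\'akov\'a--Dani, aim for a $2$-vs-$1$ gap) is on the right track, but the specific construction you sketch is fundamentally different from the paper's and, as stated, has a hole exactly where you flag it. You propose to make the \emph{elements} of $X\cup Y\cup Z$ into agents and the \emph{triples} into heavy items, with a private bundle of light items per agent. You correctly observe that this does not immediately work: a bipartite matching from element-agents to triple-items can saturate everyone even when the 3DM instance is a NO instance, because nothing forces the chosen triples to be pairwise disjoint. You describe resolving this with ``auxiliary dummy agents and/or additional blocking items'' but do not give the gadget, and you explicitly identify engineering this coupling with only $\{1,\epsilon\}$ weights as ``the main obstacle.'' That obstacle is precisely the content of the proof, so the proposal is incomplete.

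The paper's construction sidesteps the coupling problem entirely by inverting your choice of sides: it takes the \emph{triples} $E$ to be the agents, not the elements. The items are the elements $X\cup Y$ (light, weight $\epsilon$) plus, for each $z\in Z$, exactly $d(z)-1$ copies of $z$ as heavy items. Agent $e=(x,y,z)$ is interested only in $x$, $y$, and the copies of $z$. Counting gives $|E|-|Z|$ heavy items, so at least $|Z|$ agents must survive on light items alone; and there are exactly $2|Z|$ light items, so $\opt\le 2\epsilon$ always. If $\opt = 2\epsilon$, every light-only agent must receive exactly its own $x$ and $y$, and the $d(z)-1$ copies of each $z$ force exactly one triple per $z$ to be light-only (pigeonhole). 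Because $X$ and $Y$ are shared, the light-only triples must be disjoint, hence form a perfect $3$-matching. The YES direction is immediate. This is why the paper never needs the extra gadgetry you were searching for: disjointness is enforced automatically by scarcity of the $X\cup Y$ light items and by the ``$d(z)-1$ copies'' count, rather than by auxiliary agents. If you want to rescue your element-as-agent version, you would need to supply the concrete blocking gadget; without it, the NO direction does not go through.

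One smaller point: you aim for a YES value of $2$ and NO value of $1$, using a private bundle of $\lfloor 1/\epsilon\rfloor$ lights plus one heavy item. The paper instead targets $2\epsilon$ versus $\epsilon$, which is cleaner because it avoids any calibration or padding and works uniformly for all $\epsilon\le 1/2$. Your closing observation that $\epsilon\le 1/2$ is essentially sharp (since for $\epsilon>1/2$ the unweighted max-flow algorithm is already a $(1/\epsilon)<2$ approximation) matches the paper's discussion and is correct.
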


Our reduction shows that it is \NP-hard to estimate the optimal value of the problem within any ratio smaller than $2$.
Hence, the above hardness result implies that the integrality gap of $\clp(T)$ is at least~$2$ unless $\P  = \NP$.
Actually, we are able to remove the $\P \neq \NP$ assumption by giving an instance with integrality gap $2$ in Section~\ref{sec:hardness}.

For the restricted allocation problem, the best hardness result on the approximation ratio is $2$ while the best upper bound for the integrality gap of $\clp(T)$ is $4$.
It is not known which bound (or none) is tight.
As a step towards closing this gap, we analyze the integrality gap of $\clp(T)$ for the $(1,\epsilon)$-restricted case and show that the upper bound of $4$ is not tight in this case (Section~\ref{sec:Integrality-gap}).
Our result on the integrality gap upper bound implies that in polynomial time we can estimate $\opt$ for the $(1,\epsilon)$-restricted allocation problem within a factor of $3+\delta$, for any $\delta > 0$.

\begin{theorem}[Integrality Gap]\label{th:integrality-gap}
	The integrality gap of the configuration-LP of the $(1,\epsilon)$-restricted allocation problem is at most $3$.
\end{theorem}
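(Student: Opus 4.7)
The plan is to prove the integrality gap bound via a Haxell-style local search, in the spirit of Asadpour--Feige--Saberi~\cite{talg/AsadpourFS12}, adapted to exploit the heavy/light dichotomy of the $(1,\epsilon)$-restricted setting. It suffices to show that whenever $\clp(T)$ is feasible, there exists an integer $T/3$-allocation; by definition of the integrality gap this implies the theorem.

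The structural observation tailored to our setting is the following. For each agent $i$, call a subset of $H_i$ of size exactly $h := \lceil T/3 \rceil$ a \emph{pure-heavy block} for $i$, and a subset of $L_i$ of size exactly $\ell := \lceil T/(3\epsilon) \rceil$ a \emph{pure-light block}; every pure block has weight at least $T/3$. Every configuration $S \in C(i,T)$ contains a pure block: since $|S \cap H_i| + \epsilon\,|S \cap L_i| \geq T$, either $|S \cap H_i| \geq h$ and a pure-heavy block is contained in $S$, or $|S \cap H_i| < h$ which forces $\epsilon |S \cap L_i| > 2T/3$, comfortably above the $T/3$ needed for a pure-light block and in fact close to two disjoint pure-light blocks. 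The ``one configuration, two options, with slack on the light side'' dichotomy is what will ultimately yield the constant $3$ in place of $4$.

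The algorithmic skeleton is the standard Haxell alternating-tree local search: maintain a partial $T/3$-allocation in which each assigned agent holds one pure block, and while some agent $r$ is unsatisfied, grow an alternating hypertree rooted at $r$. At each step consider the pure blocks (of either type) available to the current frontier agent; if some such block is item-disjoint from the current assignment, perform an augmentation along the alternating path; otherwise, add the block to the tree together with every agent whose assigned block it intersects, and recurse. To argue that the search must eventually augment rather than grow forever, I would invoke the CLP solution restricted to the blocking set $\mathcal{B}$ of agents currently in the tree: each agent in $\mathcal{B}$ carries fractional mass at least $1$ on configurations, each of which --- by the structural observation --- donates a pure block to the supply of candidate blocks reachable from $\mathcal{B}$. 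A weighted averaging then shows that this supply strictly exceeds the $|\mathcal{B}|$ pure blocks currently held by $\mathcal{B}$, forcing an augmenting swap.

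The main obstacle is calibrating the counting tightly enough to obtain $3$ rather than $4$. In the AFS analysis, the ratio $4$ arises from two compounded factors of $2$: one from extracting a small sub-bundle out of a CLP configuration and one from comparing candidates against currently held bundles. Here, I would split the local-search tree into a heavy side (whose items live in $\bigcup_i H_i$) and a light side (whose items live in $\bigcup_i L_i$), since the two types of pure blocks draw from disjoint item pools, and carry out the counting on each side separately. The factor-two slack from ``light-heavy'' configurations (those with $|S \cap H_i| < h$) covers the light side, while on the heavy side the ratio $h \approx T/3$ is already tight; combining the two accounts through the CLP's unified item-conflict constraints, without losing any further multiplicative factor, is the delicate step and is exactly where the improvement from $4$ to $3$ originates.
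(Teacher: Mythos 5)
Your high-level route matches the paper's: normalize, pass to pure (heavy-only / light-only) bundles, build a Haxell-style alternating hypertree, and show via a counting argument against the CLP solution that the tree cannot get stuck. However, the proposal stops exactly where the work begins, and there are two concrete issues with the parts you do pin down.

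First, the counting lemma is absent. You acknowledge that ``combining the two accounts\ldots without losing any further multiplicative factor, is the delicate step and is exactly where the improvement from $4$ to $3$ originates,'' but you do not carry it out. The paper's proof hinges on a precise two-sided count of light items in the tree: setting $r = \lceil k/3 \rceil$ with $k = \lceil T/\epsilon \rceil$, one shows that $k \geq 3r - 2$, so once an agent has no addable light edge, each unit of light-configuration mass contributes at least $k-(r-1) \geq 2r-1$ light items already in the tree, while the tree itself holds at most $(r-1)|X^\epsilon| + r|Y^\epsilon| \leq (2r-1)(x^\epsilon_P - 1)$ light items. The bound $k \geq 3r-2$ is exactly where the factor $3$ comes from, and it is not implied by any weighted averaging over pure blocks in the abstract; it is a specific numeric inequality that must be established.

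Second, and relatedly, your structural observation is quantitatively too weak as stated. You argue that a configuration with fewer than $h = \lceil T/3 \rceil$ heavy items has $\epsilon|S\cap L_i| > 2T/3$, i.e., roughly $2\ell$ light items where $\ell = \lceil T/(3\epsilon)\rceil$. But the counting needs approximately $3\ell$ light items per light configuration, not $2\ell$: with only $2\ell$ the two sides of the inequality do not separate, so the local search is not guaranteed to progress. The paper resolves this by first normalizing to $T \in [1, \tfrac{3}{2})$ via the Bez\'{a}kov\'{a}--Dani rounding (if $T \geq \tfrac{3}{2}$ one already gets a $(T-1) \geq T/3$ allocation from the assignment LP). After that, $h = 1$, so the only light configurations are ones with \emph{no} heavy item, which therefore carry weight $\geq T$ (not merely $> 2T/3$) in light items, giving $\geq k \geq 3r - 2$ light items. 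The paper then additionally preprocesses the CLP solution so that each configuration is either entirely heavy or entirely light, which keeps the counting clean. Without both the normalization and the preprocessing, the ``light-heavy'' configurations you allow would spoil the count, and the argument as sketched would at best recover a constant around $4$ rather than $3$.
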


We also observe that by picking the ``closest addable edge'', the running time of the local search algorithm can be improved to quasi-polynomial (Section~\ref{ssec:quasi-approx}). The idea was first used by Polacek and Svensson~\cite{icalp/PolacekS12} to obtain the $(4+\delta)$-approximation algorithm for the restricted allocation problem. However, instead of constructing feasible dual solutions for $\clp(T)$, our analysis is based on the assumption of $T\leq \opt$ and is a direct extension of our proof on the integrality gap of $\clp(T)$.

\begin{theorem}[Quasi-Polynomial-Time Approximation]\label{th:quasi-3}
	There exists a $(3+4\epsilon)$-approximation algorithm for the $(1,\epsilon)$-restricted allocation problem that runs in $n^{O(\frac{1}{\epsilon}\log n)}$ time.
\end{theorem}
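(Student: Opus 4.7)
My plan is to run a Haxell-style local search that parallels the integrality-gap proof of Theorem~\ref{th:integrality-gap}, but using the ``closest addable edge'' rule of Polacek and Svensson to cap the size of the alternating tree at quasi-polynomial. First, binary-search (with precision $\epsilon$) to find a target value $T$ with $T \le T^*$ (the largest value such that $\clp(T)$ is feasible); since the integrality gap is at most $3$, we have $T \ge (1-O(\epsilon))\cdot\opt$. Setting the per-agent goal to $T/(3+4\epsilon)$ will then deliver the claimed $(3+4\epsilon)$-approximation, provided we can construct an integral $T/(3+4\epsilon)$-allocation from the feasibility of $\clp(T)$.

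Next, I would run local search on a partial allocation. At each step pick some agent $i_0$ whose current bundle has weight less than $T/(3+4\epsilon)$, and grow an alternating layered structure whose nodes alternate between \emph{blocking} bundles (witnessing why $i_0$ cannot currently be satisfied) and \emph{addable} bundles (small sets of items that, if stolen from the owner, would allow the owner to be re-satisfied). This mirrors exactly the structure used in the integrality-gap proof, except that I relax the definition of ``addable'' by a slack of $(1+\Theta(\epsilon))$ so that each new blocking bundle must improve the overall signature by a multiplicative factor. Following Polacek--Svensson, when expanding the tree I insist on picking the \emph{closest} addable bundle at each step; this ensures each layer contains at most $\mathrm{poly}(n)$ nodes, and because the relaxation enforces a $(1+\Theta(\epsilon))$ improvement per layer while the maximum useful depth before saturation is bounded by $\log_{1+\epsilon}(n/\epsilon)=O(\log n / \epsilon)$, the total tree size is $n^{O(\log n/\epsilon)}$.

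The main obstacle is the correctness side: I must show that the $(1+\Theta(\epsilon))$-slackened search still succeeds, i.e., the tree always either closes up into a valid swap or yields a violated dual constraint contradicting the feasibility of $\clp(T)$. I would adapt the counting argument behind Theorem~\ref{th:integrality-gap} directly, tracking the total weight of heavy and light items reached by the tree layer by layer. The slackness causes each swap of weight $w$ in the tree to forfeit at most $\Theta(\epsilon)\cdot w$, but summed over all layers (and over both heavy swaps and the finer light-item bookkeeping specific to the $(1,\epsilon)$ setting) this cumulative loss stays within the extra $4\epsilon$ slack we baked into the per-agent goal $T/(3+4\epsilon)$. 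Combining this with the standard monovariant argument (the signature of addable bundles strictly decreases each time the tree closes) shows the algorithm terminates after polynomially many outer iterations, for a total running time of $n^{O(\log n/\epsilon)}$.
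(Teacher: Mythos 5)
Your high-level plan matches the paper's algorithm: grow a Haxell-style alternating tree for an unsatisfied agent~$i_0$ and use the Polacek--Svensson ``closest addable edge'' rule so the tree depth stays $O(\frac{1}{\epsilon}\log n)$, giving $n^{O(\frac{1}{\epsilon}\log n)}$ distinct signatures. But the analytical route you describe differs from the paper's in a way the authors explicitly call out, and your version of the key quantitative step is where the real work lives and is left unstated.

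On the route: you propose that when the tree gets stuck, you would ``yield a violated dual constraint contradicting the feasibility of $\clp(T)$.'' That is Polacek--Svensson's original strategy for the general restricted setting. This paper deliberately avoids dual feasibility arguments altogether: the only hypothesis used is $T\le\opt$, and the contradiction comes from counting light items against what the optimal allocation must assign to ``stuck'' agents (those all of whose heavy items and all but $\le r-1$ light items are already in the tree). That switch from ``CLP is feasible'' to ``$T\le\opt$'' is flagged in the introduction as a feature of this work (it is simpler, and it survives in settings where $\clp(T)$ is unavailable, e.g.\ sub-additive utilities). So you have correctly reconstructed a valid \emph{alternative} route but not the one the paper takes.

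On the gap: you describe the exponential growth as coming from ``relaxing the definition of addable by a slack of $(1+\Theta(\epsilon))$ so that each new blocking bundle must improve the overall signature.'' That is not how the slack is obtained. The definition of addable is unchanged; what changes is the light-bundle size $r=\lceil k/(3+4\epsilon)\rceil$ instead of $\lceil k/3\rceil$. The growth factor $(1+\frac{\epsilon}{10})$ for $|Y^\epsilon_{\le 2l}|$ across consecutive even depths is a \emph{consequence} of a careful sandwich: when no addable edge exists within distance $2l+1$, one gets the lower bound $|B^\epsilon(X^\epsilon_{\le 2l+1}\cup Y^\epsilon_{\le 2l+2})|\ge (k-r+1)(|Y^\epsilon_{\le 2l}|+1)$ from $T\le\opt$, and the upper bound $(r-1)|X^\epsilon_{\le 2l+1}|+r|Y^\epsilon_{\le 2l+2}|$; reconciling them uses the arithmetic fact $k\ge 3r-1$ (ruling out $k=3r-2$ requires a short but specific calculation with $r=\lceil k/(3+4\epsilon)\rceil$) and $r\le 5/\epsilon$ to conclude $k-r+1\ge 2r\ge(1+\frac{\epsilon}{10})(2r-1)$. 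This is the crux of the $(3+4\epsilon)$ ratio, and your write-up hand-waves past it (``the cumulative loss stays within the $4\epsilon$ slack''). Finally, two small points: the tree itself is always of polynomial size, it is the number of signatures, not the tree size, that is $n^{O(\frac{1}{\epsilon}\log n)}$; and the binary search need not appeal to the integrality gap to get $T\ge(1-O(\epsilon))\opt$ --- since only $k=\lceil T/\epsilon\rceil\in\{1,\dots,|B^\epsilon|\}$ matters, one can simply enumerate $k$.
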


We further extend the quasi-polynomial approximation algorithm by combining the lazy update idea of~\cite{soda/AnnamalaiKS15} to obtain a polynomial approximation algorithm (Section~\ref{sec:poly-approx}).

\begin{theorem}[Polynomial-Time Approximation]\label{th:poly-9}
	For any $\epsilon\in(0,1)$, there exists a polynomial-time $9$-approximation algorithm for the $(1,\epsilon)$-restricted allocation problem. Moreover, the approximation ratio approaches $3+2\sqrt{2}\approx 5.83$ as $\epsilon$ tends to $0$.
\end{theorem}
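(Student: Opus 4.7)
The plan is to combine the quasi-polynomial local search of Theorem~\ref{th:quasi-3} with the \emph{lazy update} technique of Annamalai, Khandekar and Svensson~\cite{soda/AnnamalaiKS15}, specialised to exploit the heavy/light structure of the $(1,\epsilon)$-instance. After binary-searching a target $T$ with $\clp(T)$ feasible (so $T\leq\opt$), the algorithm maintains a partial integral $(T/r)$-allocation, where $r$ is the target ratio to be optimised, and grows an alternating tree rooted at an unsatisfied agent whose internal nodes are blocking bundles of already-satisfied agents and whose frontier is a collection of \emph{addable} bundles usable to make progress.

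To force polynomial running time, I would introduce a slack parameter $\alpha>0$ and replace the ``closest addable edge'' rule of Theorem~\ref{th:quasi-3} by a lazy one: a blocking bundle is promoted into the frontier only when its addable weight exceeds the current level by a factor $(1+\alpha)$. Following~\cite{soda/AnnamalaiKS15} I would assign each addable bundle a \emph{signature} (roughly the multiset of layer-indices it reaches), and show that between two visits to the same signature the number of frontier bundles shrinks by a factor depending on $\alpha$; because signatures live in a polynomial-size universe whenever $\alpha$ is a constant, this bounds the total number of tree updates by a polynomial. The approximation ratio is then controlled by re-running the proof of Theorem~\ref{th:integrality-gap} with the lazy slack carried along: CLP-feasibility still guarantees that some layer contains enough addable mass to make progress after scaling $T$ by a factor of $3$, but the lazy rule costs an extra multiplicative slack of $\alpha$ in the addability test, and an extra slack of $c(\epsilon)/\alpha$ to pay for blocking-bundle overflow. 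Separating heavy contributions (each blocking bundle contains at most one heavy item) from light ones (total weight at most~$1$) should yield $c(\epsilon)\to 2$ as $\epsilon\to 0$, giving an overall ratio of $r(\alpha,\epsilon)=3+\alpha+c(\epsilon)/\alpha$. Taking $\alpha=\sqrt{c(\epsilon)}$ then yields $r\to 3+2\sqrt{2}$ in the limit, while a safe fixed choice such as $\alpha=3$ keeps $\alpha+c(\epsilon)/\alpha\leq 6$ uniformly in $\epsilon\in(0,1)$ and gives the $9$-bound.

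The main obstacle is making the two analyses compatible with a single value of $\alpha$: the signature-shrinkage argument requires $\alpha$ to be a constant bounded away from~$0$, whereas the approximation argument wants $\alpha$ as small as possible. The $(1,\epsilon)$-restriction is precisely what reconciles them, because a blocking bundle can exceed its agent's target by at most one light item of weight $\epsilon$, so the lazy rule's penalty per blocking step is essentially additive in $\epsilon$. Verifying this bound carefully -- so that the constant $c(\epsilon)$ really does collapse to $2$, rather than to something larger, in the $\epsilon\to 0$ limit -- and checking that the Annamalai-style amortised potential survives the replacement of exact-equality frontier tests by the $(1+\alpha)$-approximate tests, is where I expect most of the technical work to lie.
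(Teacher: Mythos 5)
Your high-level plan — binary search on $T$, alternating tree, and importing the lazy-update machinery of Annamalai et al.\ to turn the quasi-polynomial search into a polynomial one with a signature argument — is the right strategy, and the arithmetic $3+\alpha+c/\alpha$ with $\alpha=\sqrt{c}$ does formally reproduce $3+2\sqrt{2}$. However, the mechanism you describe is missing the two structural ideas that actually drive the paper's proof, and one of your load-bearing assumptions contradicts a point the authors explicitly emphasize.

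First, the paper does not use the lazy rule you describe (``promote a blocking bundle once its addable weight exceeds the current level by a factor $(1+\alpha)$''). What it does instead is split light bundles into two \emph{different sizes}: addable light edges contain $p$ light items while blocking light edges contain only $r < p$ light items, with $p$ roughly $3r$. This ``fat addable / thin blocking'' asymmetry (the ``greedy player'' of Annamalai et al.) is where the geometry of the ratio comes from: the exponential-growth lemma compares the upper bound $(r-1)|X|+r|Y|+p|I|$ on the number of light items in the tree against the lower bound $(k-p+1)$ items per light-only agent, and the free parameter $p$ is then optimized (to $3r-1$, and to $(2+\sqrt{2})r-1$ as $\epsilon\to 0$) to make the contradiction go through with $r\approx k/9$ and $r\approx k/(3+2\sqrt{2})$ respectively. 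Your $\alpha$ slack does not play this role, and without the $p$ versus $r$ gap I don't see how the exponential-growth property (which bounds the number of layers, and hence the signature space) can be established. Second, heavy items are not put into the alternating tree at all; they are absorbed into a separate flow network $G(A\cup B^1, E_M)$, and reachability/augmenting paths in that digraph replace the heavy layers of the quasi-polynomial algorithm. Your proposal treats heavy and light blocking bundles uniformly, and the claim that ``each blocking bundle contains at most one heavy item'' suggests you are still mixing them in the tree. Third, you invoke CLP-feasibility to certify that ``some layer contains enough addable mass.'' The authors deliberately avoid this: the proof of Lemma~\ref{lemma:exp-growth} only uses $T\le\opt$ — one looks at the residual graph of the flow, shows more than $(1-\gamma)|Y_{\le l}|$ blocking agents are forced to be light-only in the optimum, and counts the $\ge k-p+1$ of their light items that must already sit in the tree. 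No dual solution or CLP primal witness is constructed. So while your outline points at the right body of technique, it omits the fat/thin bundle dichotomy and the flow-network handling of heavy items — both essential to bounding the layer count and obtaining the stated ratios — and it rests the counting on CLP-feasibility where the actual argument only needs integral optimality.
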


Interestingly, while our quasi-polynomial- and polynomial-time algorithms are extended from the integrality gap analysis by combining ideas on improving the running time of local search, unlike existing techniques, our algorithms and analysis do not directly use the feasibility of $\clp(T)$.
To lead to contradictions, existing results~\cite{icalp/PolacekS12, soda/AnnamalaiKS15} tried to construct feasible dual solutions for $\clp(T)$ with positive objective values (which implies the infeasibility of $\clp(T)$).
In contrast, our analysis shows that as long as $T\leq \opt$, our algorithms terminate with the claimed approximation ratios, which simplifies the analysis and is an advantage in some cases when $\clp(T)$ cannot be applied, i.e., when the utility function is sub-additive~\cite{approx/KhotP07}.

\subsection{Other Related Work}

\paragraph{Unrestricted Allocation Problem.}
Based on Bansal and Sviridenko's proof~\cite{stoc/BansalS06} of $O(\sqrt{n})$-integrality gap for the unrestricted allocation problem, Asadpour and Saberi~\cite{stoc/AsadpourS07} achieved an $\tilde{O}(\sqrt{n})$-approximation algorithm.
The current best approximation result for the problem is an $\tilde{O}(n^{\delta})$-approximation algorithm that runs in time $n^{O(\frac{1}{\delta})}$, for any $\delta=\Omega(\frac{\log\log n}{\log n})$, obtained by Chakrabarty et al.~\cite{focs/ChakrabartyCK09}.

\paragraph{Other Utility Functions.}
The max-min fair allocation problem with different utility functions has also been considered. 
Golovin~\cite{golovin2005max} gave an $(m-n+1)$-approximation algorithm for the problem when the utility functions of agents are submodular.
We note that their result can also be extended to sub-additive utility functions.
Khot and Ponnuswami~\cite{approx/KhotP07} also considered the problem with sub-additive utility functions and obtained a $(2n-1)$-approximation algorithm.
Later Goemans and Harvey~\cite{soda/GoemansHIM09} obtained an $\tilde{O}(n^{\frac{1}{2}+\delta})$-approximation for submodular max-min allocation problem in $n^{O(\frac{1}{\delta})}$ time using the $\tilde{O}(n^{\delta})$-approximation algorithm by Chakrabarty et al.~\cite{focs/ChakrabartyCK09} as a black box.

\section{Integrality Gap for Configuration LP}\label{sec:Integrality-gap}

We show in this section that
for the $(1,\epsilon)$-restricted allocation problem, the integrality gap of the $\clp$ is at most $3$.
Fix $T > 0$ be such that $\clp(T)$ is feasible.

We show that whenever $\clp(T)$ is feasible, there exists a $\frac{T}{3}$-allocation (hence $\opt\geq \frac{T}{3}$), which implies an integrality gap of at most $3$.
Given any solution $x$ for $\clp(T)$ and the induced ALP solution $\hat{x}$,
for all $\hat{x}_{ij}=0$, we can remove $j$ from $B_i$ (pretending that $i$ is not interested in $j$). This operation will preserve the feasibility of ${x}$ while (possibly) decreasing $\opt$, which could only enlarge the integrality gap.
From now on we assume that a positive fraction of every item in $B_i$ is assigned to agent $i$.

\paragraph{Assumption on $T$.}
To achieve a $\frac{T}{3}$-allocation, we can assume that $T<\frac{3}{2}$; otherwise, we can  get a $T-1\geq \frac{T}{3}$ allocation by rounding the ALP solution $\hat{x}$~\cite{sigecom/BezakovaD05}.
We can further assume $T\geq 1$ since otherwise we can set all weights $w_j\geq T$ to $T$ (which does not change $\clp(T)$) and scale all weights so that the maximum weight is $1$.
From now on, we assume that $T \in [ 1,\frac{3}{2} )$ and $\clp(T)$ is feasible.

Let $k= \lceil \frac{T}{\epsilon} \rceil$. Note that every
bundle consisting solely of light items must contain at least $k$ items to have sufficient utility.
For all $i\in A$, let $B^1_i = \{ j\in B_i: w_j = 1\}$ be the set of {heavy} items and $B^\epsilon_i = \{ j\in B_i: w_j = \epsilon \}$ be the set of {light} items.
Our algorithm fixes an integer $r=\lceil\frac{k}{3}\rceil$ and tries to assign items such that each agent $i$ receives either a heavy item $j\in B^1_i$ or $r$ light items in $B^\epsilon_i$. Suppose we are able to find such an allocation, then the integrality gap is $\frac{T}{r\epsilon}\leq 3$.

\subsection{Getting a ``Minimal'' Solution}

Let ${x}^*$ be a solution for $\clp(T)$.
We create another solution ${x}$ (which might not be feasible) as follows.
Initialize $x_{i,S} = 0$ for all $i\in A$ and $S\subseteq B_i$.
For all $x^*_{i,S} > 0$, where $S\in C(i,T)$,
\begin{enumerate}
	\item if $S' = S\cap B^1_i \neq \emptyset$, then set $ x_{i,S'} = x^*_{i,S}$;
	\item otherwise, $S$ contains only light items and set $x_{i, S} = x^*_{i,S}$.
\end{enumerate} 

Note that for each $i\in A$ we have the following properties on ${x}$:
\begin{enumerate}
	\item (heavy/light configurations) if $x_{i,S}>0$, then ($S\subseteq B^1_i \wedge|S|\geq 1$) or ($S\subseteq B^\epsilon_i \wedge |S|\geq k$).
	\item (covering constraint for agent) $\sum_{S\subseteq B_i} x_{i,S} =  \sum_{S\in C(i,T)} x^*_{i,S} \geq 1$.
	\item (packing constraint for item) for all $j\in B$: $\sum_{i,S:j\in S} x_{i,S} \leq \sum_{i,S:j\in S\in C(i,T)} x^*_{i,S} \leq 1$.
\end{enumerate}

Now we construct a hypergraph $H(A\cup B, E)$ as follows: for all $x_{i,S}> 0$,
\begin{enumerate}
	\item if $S\subseteq B^1_i$, then for each $j\in S$, add $\{i,j\}$ to $E$ (we call such an edge {\bf heavy});
	\item otherwise for each $S'\subseteq S$ and $|S'|= r$, add $\{i\} \cup S'$ to $E$ (we call such an edge {\bf light}).
\end{enumerate}

A matching $M\subseteq E$ is a collection of disjoint edges. 
Note that any perfect matching of $H$ that matches all nodes in $A$ provides an allocation that assigns each $i\in A$
either a heavy item or $r$ light items.
For all $F\subseteq E$, let $A(F) = A\cap(\bigcup_{e\in F}e)$ and $B(F) = B\cap(\bigcup_{e\in F}e)$.

\subsection{Finding a Perfect Matching}

Recall that the existence of a perfect matching
in $H(A \cup B, E)$ such that every agent in $A$ is matched
implies that the integrality gap of $\clp(T)$ is at most 3.

\begin{theorem}
\label{th:perfect}
	The above hypergraph $H(A\cup B, E)$ has a perfect matching.
\end{theorem}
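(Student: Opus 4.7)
The natural strategy is a Haxell-style local search, in the spirit of Asadpour, Feige and Saberi~\cite{talg/AsadpourFS12}, but exploiting the $(1,\epsilon)$-structure and the specific choice $r = \lceil k/3 \rceil$ to obtain the constant $3$ rather than $4$.

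The plan is to start with a matching $M \subseteq E$ that saturates a maximum-size subset of $A$, and suppose for contradiction that some agent $a_0 \in A$ is left unmatched. I would then try to build an augmenting structure layer by layer. Let $E_1 \subseteq E$ denote the set of candidate edges containing $a_0$; each such candidate is \emph{blocked} by the edges of $M$ that share an item with it. Let $M_1 \subseteq M$ be the union of these blocking edges and $A_1 = A(M_1)$ the corresponding agents. For each $a \in A_1$ I look for an ``addable'' edge incident to $a$ whose items lie outside the already-touched region; if some such edge is truly unblocked, a sequence of swaps (remove the blocking edge through $a$, insert the new edge for $a$, continue for the other agents of $A_1$) produces a matching saturating a strictly larger subset of $A$ than $M$, contradicting maximality. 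Otherwise, the addable edges from $A_1$ bring in a fresh set of blocking edges $M_2$ with agents $A_2$, and the process repeats on deeper layers.

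The key lemma will be that this layered process cannot go on forever without uncovering an unblocked addable edge. To prove it, I would use the CLP-derived fractional solution $x$ to compare fractional masses. For $A^* = \{a_0\} \cup A_1 \cup A_2 \cup \cdots$, the covering constraint guarantees total configuration mass at least $|A^*|$ for the agents of $A^*$. A heavy configuration of size $s$ produces $s$ heavy edges of size $2$, while a light configuration of size at least $k$ produces many light edges of size $r+1$, from which roughly $\lfloor k/r \rfloor \ge 3$ pairwise disjoint light edges can be extracted per configuration. The packing constraint caps the total mass supported by any single item by $1$. Plugging in $T \in [1, 3/2)$ and $r = \lceil k/3 \rceil$ and accounting separately for the contributions of heavy and light items to the blocking, one should conclude that the addable fractional mass around $A^*$ strictly exceeds the mass $M$ can block, forcing some unblocked addable edge to exist and, via the swap described above, contradicting the maximality of $M$.

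The principal obstacle is making this mass comparison uniform across the two edge types: a heavy edge contains a single unit-weight item and is blocked by exactly one edge of $M$, whereas a light edge contains $r \approx k/3$ items of weight $\epsilon$ and admits many different blockings. The choice $r = \lceil k/3 \rceil$ is tight in the sense that any smaller $r$ would fail to guarantee per-agent utility $\ge T/3$, while any larger $r$ would break the counting ratio; handling the rounding at the boundaries (especially near $T = 3/2$ and for small $k$) will likely require a short case analysis. Once the counting is set up and yields the unblocked addable edge, the resulting augmenting swap completes the proof that $H(A \cup B, E)$ admits a perfect matching saturating $A$.
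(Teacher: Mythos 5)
Your high-level plan matches the paper's proof: a Haxell-style alternating tree rooted at an unmatched agent, grown by addable edges that bring in blocking edges, contractions/swaps when an addable edge is unblocked, and a counting argument driven by a fractional CLP solution to show an addable edge always exists. Both arguments also hinge on the preprocessing of the CLP solution into configurations that are either purely heavy (stripped down to $S \cap B^1_i$) or purely light, so that each agent's fractional mass splits cleanly into $x^1_P$ and $x^\epsilon_P$ with $x^1_P + x^\epsilon_P \geq |P|$.

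The gap is in the counting step, which is the crux of the theorem, and the intuition you state for it is incorrect. You propose to extract $\lfloor k/r \rfloor \geq 3$ pairwise disjoint light hyperedges per light configuration, but with $r = \lceil k/3 \rceil$ one has $\lfloor k/r \rfloor = 2$ whenever $k \not\equiv 0 \pmod 3$ (e.g.\ $k = 3m+1$, $r = m+1$, $m\geq 1$), so the disjoint-edge count does not give you $3$. The paper instead counts light \emph{items} in the tree and derives a contradiction from matching coefficients $2r-1$ on both sides. Upper bound: each light addable edge contributes at most $r-1$ unblocked items and each light blocking edge at most $r$ items, so $|B^\epsilon(X_t \cup Y_t)| \leq (r-1)|X^\epsilon_t| + r|Y^\epsilon_t| \leq (2r-1)(x^\epsilon_P - 1)$, using $|X^\epsilon_t| \le |Y^\epsilon_t| \le x^\epsilon_P -1$. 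Lower bound: if no light edge is addable for any $i\in P$, each light configuration $S\subseteq B^\epsilon_i$ with $x_{i,S}>0$ has at most $r-1$ items outside the tree, so at least $|S|-(r-1)\geq k-r+1 \geq 2r-1$ items inside, and the packing constraint then gives $|B^\epsilon(X_t\cup Y_t)|\geq (2r-1)x^\epsilon_P$. The inequality $k \geq 3r-2$, equivalent to $r \le \lceil k/3\rceil$, is exactly what makes the $2r-1$ on the two sides meet, which is why the ratio $3$ (and not your $\lfloor k/r\rfloor$) is the operative constant. You also need the preliminary heavy case: if $x^1_P > |Y^1_t|$ then a heavy addable edge exists since only $|Y^1_t|$ heavy items are in the tree, so one may assume $x^\epsilon_P \geq |P| - |Y^1_t| = |Y^\epsilon_t|+1$ before starting the light-item count. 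Without this item-level bookkeeping (rather than disjoint-edge extraction) the argument does not close.
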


\begin{proof}
	Given a partial matching $M\subseteq E$, we show how to extend its cardinality by one if $|M|\leq |A|-1$.	
	Let $i_0\in A\backslash A(M)$ be an agent not matched by $M$.
	For the initial step, suppose $X_1$ contains an arbitrary edge $e_1$ with $A(e_1) = \{ i_0 \}$
	and $Y_1 = \block(e_1) = \{ f\in M: B\cap e_1\cap f \neq \emptyset\}$ be the blocking edges of $e_1$.
	If $\block(e_1) = \emptyset$, then we can add edge $e_1$ to the matching. Assume $\block(e_1) \neq \emptyset$.
	
For the recursive step,	suppose we already have edges $X_t$ (where $t = |X_t|$) and $Y_t$, which together form an alternating tree rooted at $i_0$.
%Suppose we have added $t$ edges in the tree. 
We consider adding the $(t+1)$-st edge to $X_t$ as follows.
	An edge $e\in E$ is {\bf addable} if (1) $A(e)\in A(X_{t}\cup Y_{t})$; (2) $B(e)\cap B(X_{t}\cup Y_{t}) = \emptyset$. If such an edge $e_{t+1}$ exists and $\block(e_{t+1})\neq \emptyset$, let $X_{t+1} = X_t\cup\{ e_{t+1} \}$ and $Y_{t+1} = Y_t\cup \block(e_{t+1})$.
	If $\block(e_{t+1}) = \emptyset$, then we {\bf contract} $X_t$ by swapping out blocking edges (the details of contraction will be discussed later).
	The contraction operation guarantees that every addable edge has at least one blocking edge.
	
	\begin{claim}[Always Addable]
		There is always an addable edge $e_{t+1}$.
	\end{claim}
	\begin{proof}
		Let $P=A(X_t\cup Y_t)$ be the agents in the tree.
		Note that $|P| = |Y_t|+1$ since each agent $i\neq i_0$ in $P$ has an unique blocking edge that {\bf introduces} $i$.
		
		Let $X^1_t$ ($Y^1_t$) be the heavy edges and $X^\epsilon_t$ ($Y^\epsilon_t$) be the light edges of $X_t$ ($Y_t$).
		
		We have $|X^1_t| = |Y^1_t|$ since heavy edges can only be blocked by heavy edges. 
		We have $|X^{\epsilon}_t| \leq |Y^{\epsilon}_t|$ since each addable edge has at least one blocking edge.

		Let $x^1_P = \sum_{i\in P}\sum_{S\subseteq B^1_i} x_{i,S}$ be the total units of heavy bundles assigned to $P$ by $x$, which is a lower bound for the total number of heavy items $B^1_P = \cup_{i\in P}B^1_i$ agents in $P$ are interested in since
		\begin{equation*}
		x^1_P = \sum_{i\in P}\sum_{S\subseteq B^1_i} x_{i,S} \leq
		\sum_{i\in P}\sum_{S\subseteq B^1_i} \sum_{j\in S} x_{i,S}
		= \sum_{j\in B^1_P} \sum_{i,S: j\in S\subseteq B^1_i} x_{i,S} \leq |B^1_P|.
		\end{equation*}
		
		Let $x^\epsilon_P = \sum_{i\in P}\sum_{S\subseteq B^\epsilon_i} x_{i,S}$ be the total units of light bundles assigned to $P$ by $x$.
		By construction of $x$, we have
		\begin{align*}
		\sum_{i\in P}\sum_{S\subseteq B_i} x_{i,S} = \sum_{i\in P}\sum_{S\subseteq B^1_i} x_{i,S} + \sum_{i\in P}\sum_{S\subseteq B^\epsilon_i} x_{i,S} = x^1_P + x^\epsilon_P  \geq |P|.
		\end{align*}
		
		Since $|Y^1_t|$ heavy items are already introduced in the tree, if $x^1_P > |Y^1_t|$, then there must exist an addable heavy edge for some $i\in P$. If $x^1_P \leq |Y^1_t|$, then we have $x^\epsilon_P \geq |P| - x^1_P \geq |Y^\epsilon_t| +1 \geq |X^\epsilon_t|+1$.
		Note that every light addable edge has at most $r-1$ unblocked items, the total number of light items in the tree is
		\begin{equation}
		\label{eq:light}
		|B^\epsilon(X_t\cup Y_t)| \leq (r-1)|X^\epsilon_t| + r|Y^\epsilon_t|
		\leq (2r-1)(x^\epsilon_P - 1) < (2r-1)x^\epsilon_P.
		\end{equation}
		
		For each $i\in P$ and $S\subseteq B^\epsilon_i$, if $x_{i,S}>0$, then by construction we have $|S| \geq k \geq 3r -2$. If $i$ has no more addable light edges (has at most $r-1$ {\bf unintroduced} light items in $H$), then at least
		$$
		\sum_{S\subseteq B^\epsilon_i} (|S|-(r-1))x_{i,S} \geq (2r-1)\sum_{S\subseteq B^\epsilon_i} x_{i,S}
		$$
		units of configurations of light items appear in the tree.
		If there is no more addable light edges for all $i\in P$, then we have
		\begin{equation*}
		|B^\epsilon(X_t\cup Y_t)| \geq \sum_{j\in B^\epsilon(X_t\cup Y_t)}\sum_{i,S:j\in S\subseteq B^\epsilon_i} x_{i,S} \geq  (2r-1)\sum_{i\in P} \sum_{S\subseteq B^\epsilon_i} x_{i,S} = (2r-1)x^\epsilon_P,
		\end{equation*}
		which is a contradiction to (\ref{eq:light}).
	\end{proof}
	
\paragraph{Contraction.}
	If $\block(e_{t+1}) = \emptyset$, then we remove the blocking edge $f$ that introduces $A(e_{t+1})$ from the matching and include $e_{t+1}$ into the matching. Both $e_{t+1}$ and $f$ are removed from the tree.
	We also remove all edges added after $f$ since they can possibly be introduced by $A(f)$.
	We call this operation a contraction on $e_{t+1}$.
	Note that this operation reduces the size of $\block(e')$ by one, for the edge $e'$ that is blocked by $f$.
	If $\block(e')=\emptyset$ after that, then we further contract $e'$ recursively.
	After all contractions, suppose the remaining addable edges in the tree are $e_1,e_2,\ldots,e_{t'}$ (ordered by the time they are added to the tree),
	we set $t=t'$, $X_{t'}$ and $Y_{t'}$ be the addable and blocking edges, respectively.
	
\paragraph{Signature.} At any moment before including an addable edge (suppose there are $t$ addable edges in the tree), let $s_i = |\block(e_i)|$ for all $i\in[t]$.
	Let $\textbf{s} = (s_1,s_2,\ldots,s_t,\infty)$ be the signature of the tree. Then, we have the following.
	\begin{enumerate}
		\item The lexicographical value of $\textbf{s}$ reduces after each iteration. If there is no contraction in the iteration, then in the signature,
		the $(t+1)$-st coordinate decreases from $\infty$ to $s_{t+1}$,
while $s_i$ remains the same for all $i\leq t$. 
Otherwise, let $e_i$ be the edge whose number of blocking edges is reduced by one but remains positive in the contraction phase. Then, we have $s_i$ is reduced by one while $s_j$ remains the same for all $j<i$.
		
\item There are at most $2^n$ different signatures since $\sum_{i\in[t]} s_i \leq n$ and $t\leq n$. 
	\end{enumerate}
	Since an addable edge can be found in polynomial time and the contraction operation stops in polynomial time, a perfect matching can be found in $n\cdot 2^n\cdot \poly(n)$ time.
\end{proof}

\section{Quasi-Polynomial-Time Approximation Algorithm}\label{ssec:quasi-approx}

We show in this section that a simple modification on the algorithm for finding a perfect matching in Section~\ref{sec:Integrality-gap} can dramatically improve the running time from $2^{O(n)}$ to $n^{O(\log n)}$.
Assume that $T\leq \opt$.
Note that in this case we can still assume $T\in[ 1,\frac{3}{2} )$.

Note that combining the polynomial time $\frac{1}{\epsilon}$-approximation algorithm, the approximation ratio we obtain in quasi-polynomial time is $\min\{ \frac{1}{\epsilon}, 3+4\epsilon \}\leq 4$ for all $\epsilon\in(0,1)$. Moreover, when $\epsilon\rightarrow 0$ (in which case the problem is still $(2-\delta)$-inapproximable), our approximation ratio approaches the integrality gap upper bound $3$.

\begin{proofof}{Theorem~\ref{th:quasi-3}}
Let $T$ be a guess of $\opt$ and $k = \lceil \frac{T}{\epsilon} \rceil$.
Since the statement trivially holds for $\epsilon\geq \frac{1}{4}$ ($\frac{1}{\epsilon}\leq 3+4\epsilon$). We assume that $\epsilon<\frac{1}{4}$ (which means $k\geq 5$).
We show that if $T\leq \opt$, then we can find in quasi-polynomial time a $\frac{T}{3+4\epsilon}$-allocation;
if no such allocation is found after the time limit,
then $T$ should be decreased as in binary search.
Let $r = \lceil \frac{k}{3+4\epsilon} \rceil$.
To prove the theorem, it suffices to show that a feasible allocation that assigns to each agent $i$ either a heavy item in $B^1_i$ or $r$ light items in $B^\epsilon_i$ can be found in $n^{O(\log n)}$ time, for any $\epsilon<\frac{1}{4}$. 
%Since our goal is to assign to each agent $i$ either a heavy item or $r$ light items, 
We define a heavy edge $\{i,j\}$ for each $j\in B^1_i$ and
a light edge $\{i\} \cup S$ for each $S\subseteq  B^\epsilon_i$ and $|S|=r$.
% (instead of defining solution $x$ and hypergraph $H$).

As in the proof of Theorem~\ref{th:perfect},
we wish to find a perfect matching for all agents in $A$.
Suppose in some partial matching, there is an unmatched
agent $i_0$ and we construct an alternating tree rooted at $i_0$.
For each addable edge $e$, we denote by $d(e)$ the number of light edges (including $e$) in the {\bf path} from $i_0$ to $e$ in an alternating tree rooted at $i_0$.
Note that a path is a sequence of edges alternating between addable edges and blocking edges.
The algorithm we use in this section is the same as previous, except that when there are addable edges, we always pick the one $e$ such that the {\bf distance} $d(e)$ is {\bf minimized}.
We show that in this case there is always an addable edge within distance $O(\frac{1}{\epsilon}\log n)$.

Let $X_i$ and $Y_i$ be the set of addable edges and blocking edges at distance $i$ from $i_0$, respectively. Note that $Y_i = \emptyset$ for all odd $i$ since light blocking edge must be introduced due to light addable edge. Moreover, since on the path from $i_0$ to every addable edge $e\in X_i$, the light edge (if any) closest to $e$ must be a blocking edge (of even distance), we know that $X_\text{odd}$ contains only light edges and $X_\text{even}$ contains only heavy edges. Let $Y^1_i$ and $Y^\epsilon_i$ be the set of heavy edges and light edges in $Y_i$, respectively.

Let $L= \lceil \log_{1+\frac{\epsilon}{10}}n \rceil$.
It suffices to prove Claim~\ref{claim:ep-increase} below since it implies that
\begin{equation*}
|Y^\epsilon_{\leq 2L+2}|>(1+\frac{\epsilon}{10})|Y^\epsilon_{\leq 2L}|>(1+\frac{\epsilon}{10})^{L}|Y^\epsilon_2|\geq n,
\end{equation*}
which is a contradiction and implies that  there is always an addable edge within distance $2L+1$.
Note the the last inequality also comes from Claim~\ref{claim:ep-increase} since otherwise $|Y^\epsilon_2|=0$ and $|Y^\epsilon_4|=0\leq \frac{\epsilon}{10} |Y^\epsilon_2|$ would be a contradiction.

\begin{claim}\label{claim:ep-increase}
	For all $l\in[L]$, when there is no more addable edge within distance $2l+1$, we have $|Y^\epsilon_{2l+2}|>\frac{\epsilon}{10} |Y^\epsilon_{\leq 2l}|$.
\end{claim}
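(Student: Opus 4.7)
The plan is to adapt the counting argument from the ``Always Addable'' claim in Theorem~\ref{th:perfect} to produce a multiplicative growth estimate on the cumulative number of light blocking edges. Fix the moment in the algorithm when no addable edge of distance $\leq 2l+1$ remains, so the tree contains light addable edges $X^\epsilon_{\leq 2l+1}$ together with their blocking edges $Y^\epsilon_{\leq 2l+2}$, and all heavy addable/blocking edges at distance $\leq 2l$. Let $P_l=\{i_0\}\cup A(Y_{\leq 2l})$ be the set of agents rooted in the tree at distance $\leq 2l+1$; since each blocking edge introduces a distinct agent, $|P_l|=1+|Y^1_{\leq 2l}|+|Y^\epsilon_{\leq 2l}|$, and by the structure of the ``minimum-distance'' rule, every potential addable edge starting at an agent in $P_l$ has already been examined.

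The first observation is a lower bound on the fractional light mass $x^\epsilon_{P_l}$. Because each heavy (addable, blocking) pair shares a single item, $|B^1(\text{tree})|=|Y^1_{\leq 2l}|$, and exhaustion forces every heavy item an agent of $P_l$ likes into the tree, so the packing constraint gives $x^1_{P_l}\leq |Y^1_{\leq 2l}|$; the CLP feasibility then yields $x^\epsilon_{P_l}\geq |P_l|-x^1_{P_l}\geq |Y^\epsilon_{\leq 2l}|+1$. On the light side, exhaustion implies that for every bundle $S$ with $x_{i,S}>0$ and $i\in P_l$, one has $|S\cap B^\epsilon(\text{tree})|\geq |S|-(r-1)\geq k-r+1$ (otherwise some $r$-subset of $S\setminus B^\epsilon(\text{tree})$ would give an addable edge at $i$). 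Summing and using the packing constraint yields
\[
|B^\epsilon(\text{tree})|\;\geq\;(k-r+1)\,x^\epsilon_{P_l}\;\geq\;(k-r+1)\bigl(|Y^\epsilon_{\leq 2l}|+1\bigr).
\]
For the matching upper bound, each light addable contributes at most $r-1$ items outside its blocking edges, light blocking edges are pairwise disjoint $r$-subsets, and each light addable has at least one blocking, giving $|X^\epsilon_{\leq 2l+1}|\leq |Y^\epsilon_{\leq 2l+2}|$ and hence $|B^\epsilon(\text{tree})|\leq (r-1)|X^\epsilon_{\leq 2l+1}|+r|Y^\epsilon_{\leq 2l+2}|\leq (2r-1)|Y^\epsilon_{\leq 2l+2}|$.

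Combining the two bounds produces $|Y^\epsilon_{\leq 2l+2}|\geq \tfrac{k-r+1}{2r-1}(|Y^\epsilon_{\leq 2l}|+1)$, so the claim reduces to checking $\tfrac{k-r+1}{2r-1}\geq 1+\tfrac{\epsilon}{10}$, i.e., $k\geq 3r-2+\tfrac{\epsilon(2r-1)}{10}$, which is then verified by plugging in $k=\lceil T/\epsilon\rceil\geq 1/\epsilon$ and $r=\lceil k/(3+4\epsilon)\rceil\leq k/(3+4\epsilon)+1$, together with the standing assumption $\epsilon<\tfrac{1}{4}$ (which guarantees $k\geq 5$). The main technical obstacle is precisely this final quantitative step: the loose relaxation $k-r+1\geq (2+4\epsilon)(r-1)$ (derived from $k\geq (3+4\epsilon)(r-1)$) becomes tight and fails for small $r$ near the boundary $\epsilon\approx 1/4$, so one must exploit the integrality of $k$ and $r$ and the strict inequality $k>(3+4\epsilon)(r-1)$ to secure the $\epsilon/10$ slack uniformly over $\epsilon\in(0,\tfrac{1}{4})$.
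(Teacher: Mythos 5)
Your skeleton matches the paper's: lower-bound the light mass in the tree by counting tree-saturated agents forced to receive only light items, upper-bound it by $(r-1)|X^\epsilon_{\leq 2l+1}|+r|Y^\epsilon_{\leq 2l+2}|\leq(2r-1)|Y^\epsilon_{\leq 2l+2}|$, and derive a contradiction. However, you route the lower bound through $\clp(T)$-feasibility and the fractional masses $x^1_{P_l},x^\epsilon_{P_l}$, implicitly reusing the ``minimal solution'' decomposition from Section~\ref{sec:Integrality-gap}. The paper deliberately does \emph{not} do this here: it argues directly from the integral optimal allocation under the weaker hypothesis $T\leq\opt$, observing that at least $|P|-|B^1(X^1_{\leq 2l})|=|Y^\epsilon_{\leq 2l}|+1$ agents of $P$ receive only light items, each contributing at least $k-r+1$ tree items. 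The two routes prove the same inequality, but the paper's choice is a stated design goal (it wants an argument that never needs $\clp(T)$, so it also applies to sub-additive utilities); your CLP version is correct but gives up that generality and requires you to re-introduce and re-justify the pure-heavy/pure-light bundle decomposition in this section.

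The real issue is the final arithmetic. You correctly reduce the claim to $k\geq 3r-2+\tfrac{\epsilon(2r-1)}{10}$ and correctly observe that the naive relaxation $k\geq(3+4\epsilon)(r-1)$ is too weak near $\epsilon\approx\tfrac14$, but you stop at saying one ``must exploit integrality and strictness'' without producing the argument --- and ``strictness of $k>(3+4\epsilon)(r-1)$'' alone does not rule out $k=3r-2$ for all $r\geq 2$. This is the crux, and the paper closes it with a concrete two-step bound: first show $k\geq 3r-1$ by assuming $k=3r-2$ and deriving $k\leq\tfrac{3k}{3+4\epsilon}+1=k-(\tfrac{4\epsilon k}{3+4\epsilon}-1)<k$ (using $k\geq 1/\epsilon$, hence $\tfrac{4\epsilon k}{3+4\epsilon}>1$); then note $r\leq\tfrac{5}{\epsilon}$, so $k-r+1\geq 2r\geq(1+\tfrac{\epsilon}{10})(2r-1)$. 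Without this (or an equivalent) computation your proof is incomplete at exactly the step you yourself flag as the main obstacle.
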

\begin{proof}
	Let $P = A(X_{\leq 2l}\cup Y_{\leq 2l}) = A(Y_{\leq 2l})\cup\{ i_0 \}$.
	Since there is no more addable edges within distance $2l+1$, we know that every agent $i\in P$ does not admit any addable edges. Hence for each $i\in P$, all heavy items in $B^1_i$ are already included in $B^1(X^1_{\leq 2l})$ and at most $r-1$ light items in $B^\epsilon_i$ are not included in $B^\epsilon(X^\epsilon_{\leq 2l+1}\cup Y^\epsilon_{\leq 2l+2})$.
	
	Since $T\leq \opt$, we know that at least $|P|-|B^1(X^1_{\leq 2l})| = |Y^\epsilon_{\leq 2l}|+1$ agents in $P$ are assigned only light items.
	Hence, out of at least $k$ light items assigned to each of those agents, at least $k-r+1$ items must be included in $B^\epsilon(X^\epsilon_{\leq 2l+1}\cup Y^\epsilon_{\leq 2l+2})$, which means
	\begin{equation*}
	|B^\epsilon(X^\epsilon_{\leq 2l+1}\cup Y^\epsilon_{\leq 2l+2})|\geq (k-r+1)(|Y^\epsilon_{\leq 2l}|+1).
	\end{equation*}
	
	%Let $x^1_P = \sum_{i\in P}\sum_{S\subseteq B^1_i}x_{i,S}$ and $x^\epsilon_P = \sum_{i\in P}\sum_{S\subseteq B^\epsilon_i}x_{i,S}$. Then we have $\sum_{i\in P}\sum_{S\subseteq B_i}x_{i,S} = x^1_P+x^\epsilon_P \geq |P| = |Y_{\leq 2l}|+1$.
	%Since there is no more heavy addable edge within distance $2l$, we have $x^1_P \leq |Y^1_{\leq 2l}|$ and hence $x^\epsilon_P \geq |Y_{\leq 2l}|+1-|Y^1_{\leq 2l}| = |Y^\epsilon_{\leq 2l}|+1$.
	
	Assume $|Y^\epsilon_{2l+2}|\leq \frac{\epsilon}{10} |Y^\epsilon_{\leq 2l}|$, we have $|Y^\epsilon_{\leq 2l+2}|\leq (1+\frac{\epsilon}{10}) |Y^\epsilon_{\leq 2l}|$.
	Since every addable edge contains at most $r-1$ unblocked items (items not used by $M$), we have the following upper bound for the number of light items in the tree:
	\begin{equation*}
	|B^\epsilon(X^\epsilon_{\leq 2l+1}\cup Y^\epsilon_{\leq 2l+2})|\leq (r-1)|X^\epsilon_{\leq 2l+1}|+r|Y^\epsilon_{\leq 2l+2}|\leq (1+\frac{\epsilon}{10})(2r-1)|Y^\epsilon_{\leq 2l}|.
	\end{equation*}

	%For each $i\in P$, $x_{i,S}>0$ and $S\subseteq B^\epsilon_i$, we have $|S|\geq k$. Since at most $r-1$ light items in $S$ are not included in $|B^\epsilon(X^\epsilon_{\leq 2l+1}\cup Y^\epsilon_{\leq 2l+2})|$ (otherwise there exists an addable light edge within distance $2l+1$), we have the following lower bound:
	
	For $\epsilon<\frac{1}{4}$, $T\in[ 1,\frac{3}{2} )$, $k =\lceil \frac{T}{\epsilon} \rceil$ and $r = \lceil \frac{k}{3+4\epsilon} \rceil$, we have $k\geq 3\lceil \frac{k}{3} \rceil - 2\geq 3r-2$. Suppose $k = 3r-2$, then we have
	\begin{equation*}
	k = 3\lceil \frac{k}{3+4\epsilon} \rceil-2\leq \frac{3k}{3+4\epsilon}+1 = k-(\frac{4\epsilon k}{3+4\epsilon}-1),
	\end{equation*}
	which is a contradiction since $\frac{4\epsilon k}{3+4\epsilon}>1$. Hence we have $k\geq 3r-1$, which implies
	\begin{equation*}
	k-r+1 \geq (3r-1)-(r-1) = 2r \geq  (1+\frac{\epsilon}{10})(2r-1)
	\end{equation*}
	since $r\leq \frac{5}{\epsilon}$. Hence we have a contradiction.
\end{proof}

At any moment before adding an addable edge, suppose we have constructed $X_{\leq 2l}$ and $Y_{\leq 2l}$.
By the above argument we have $2l\leq 2L = O(\frac{1}{\epsilon}\log n)$.
Let $a_i = -|X_i|$.
Let $|Y^1_i|=b_i$ and $|Y^\epsilon_i| = b_{i-1}$ for all even $i$.
Let $\textbf{s}  = (a_0,b_0,a_1,b_1,\ldots,a_{2l},b_{2l},\infty)$ be the {\bf signature} of the alternating tree.
We show that $\textbf{s}$ is lexicographically decreasing accross all iterations.

\paragraph{No contraction.} Suppose we added an addable edge $e$ with $\block(e)\neq \emptyset$, then $e$ will be included in $X_{\leq 2l}$ or a newly constructed $X_{2l+1}$, in both cases the lexicographic value of $\textbf{s}$ decreases since the last modified coordinate decreases.

\paragraph{Contraction.} Suppose the newly added edge has no blocking edge, then in the contraction, let $f\in Y^\epsilon_{2i}$, which must be light, be the last blocking edge that is removed. Since $b_{2i-1}$ decreases while $a_{j}$ (for all $j\leq 2i-1$) and $b_{j}$ (for all $j\leq 2i-2$) do not change, the lexicographic value of $\textbf{s}$ decreases.

Since $L= O(\frac{1}{\epsilon}\log n)$, there are $n^{O(\frac{1}{\epsilon}\log n)}$ different signatures.
Since an addable edge can be found in polynomial time and the contraction operation stops in polynomial time, the running time of the algorithm is $n\cdot \poly(n)\cdot n^{O(\frac{1}{\epsilon}\log n)} = n^{O(\frac{1}{\epsilon}\log n)}$.
\end{proofof}

\section{Polynomial-Time Approximation Algorithm}\label{sec:poly-approx}

We give a polynomial-time approximation algorithm in this section.
Based on the previous analysis, to improve the running time from $n^{O(\log n)}$ to $n^{O(1)}$, we need to bound the total number of iterations (signatures) by $\poly(n)$.
On a high level, our algorithm is similar to that of Annamalai et al.~\cite{soda/AnnamalaiKS15}: we apply the idea of lazy update and greedy player such that after each iteration, either a new layer is constructed or the size of the highest layer changed is reduced by a constant factor.
However, instead of constructing feasible dual solutions, we extend the charging argument used in the previous sections on counting the number of light items in the tree to prove the exponential growth property of the alternating tree.
Moreover, by avoiding the use of $\clp(T)$ (and its dual), we are able to provide a simpler analysis of the algorithm while achieving a better approximation ratio.

In binary search, let $T$ be a guess of $\opt$. As explained earlier,
we can assume $T\in [ 1,\frac{3}{2} )$. Let $k = \lceil \frac{T}{\epsilon} \rceil$.
%Note that any configuration without heavy item contains at least $k$ light items.
Our algorithm aims at assigning to each agent either a heavy item or $r$ light items, for some fixed $r \leq \frac{k}{2}$ when $T\leq \opt$.
Such an allocation gives a $\frac{k}{r}$-approximation.
%If $\frac{k}{r}\leq \rho$, then we have a $\rho$-approximation.
Let $p \in(r,k)$ be an integer parameter.
Let $0<\mu\ll1$ be a very small constant, i.e., $\mu = 10^{-10}$.

As before, for each $i\in A$, we call $\{i,j\}$ a heavy edge for  $j\in B^1_i$, and
 $\{i\} \cup S$ a light edge if $S\subseteq  B^\epsilon_i$.
 However, in this section,
we use two types of light edges: either $|S| = p$ (addable edges) or $|S| = r$ (blocking edges).
%Let $H(A\cup B, E)$ be a hypergraph.
%For all $i\in A$, we call $\{ i,j \}\in E$ a {\bf heavy edge} for all $j\in B^1_i$ and $\{ i \} \cup R\in E$ a {\bf light edge} for all $R\subseteq B^\epsilon_i$ with $|R| = r$.
%We show that we can find in polynomial time a perfect matching $M$ with $A(M) = A$ for $H$, which implies a $\frac{k}{r}$ approximation allocation.
Let $M$ be a maximum matching between $A$ and $B^1$.
We can regard $M$ as a partial allocation that assigns maximum number of heavy items.
Let $i_0$ be an unmatched node in $M$.
We can further assume that every heavy item is interesting to at least $2$ agents since otherwise we can assign it to the only agent and remove the item and the agent from the problem instance.
We use $``+"$ and $``-"$ to denote the inclusion and exclusion of singletons in a set, respectively.

\subsection{Flow Network}

Let $G(A\cup B^1, E_M)$ be a {\bf directed} graph uniquely defined by $M$ as follows.
For all $i\in A$ and $j\in B^1_i$, if $\{i,j\}\in M$ then $(j,i)\in E_M$, otherwise $(i,j)\in E_M$.
We can interpret the digraph as the residual graph of the ``interest'' network (a digraph with directed edges from each $i$ to $j\in B^1_i$) with current flow $M$.
The digraph $G$ has the following properties
%(since $M$ assigns maximum number of heavy items):
\begin{compactitem}
	\item every $i\in A$ has in-degree $\leq 1$, every $j\in B^1$ has out-degree $\leq 1$ and in-degree $\geq 1$.
	\item all heavy items reachable from $i\in A$ with in-degree $0$ must have out-degree $1$ (otherwise we can augment the size of $M$ by one).
\end{compactitem}

Given two sets of light edges $Y$ and $X$ ($A(Y)$ and $A(X)$ do not have to be disjoint), let $f(Y,X)$ denote the maximum number of node-disjoint paths in $G(A\cup B^1, E_M)$ from $A(Y)$ to $A(X)$.
Let $F(Y,X)$ be those paths.
We will later see that each such path alternates between heavy edges and their blocking edges.
Unlike the quasi-polynomial-time algorithm, in our polynomial-time algorithm, the heavy edges do not appear in the alternating tree. Instead, they are used in the flow network $G(A\cup B^1, E_M)$ to play a role of connecting existing addable light edges and blocking light edges.

\subsection{Building Phase}

\begin{definition}[Layers]
For all $i\geq 1$, a layer $L_i$ is a tuple $(X_i, Y_i)$, where $X_i$ is a set of addable edges and $Y_i$ is a set of blocking edges that block edges in $X_i$.
\end{definition}

Initialize $l= 0$, $L_0 = ( \emptyset, \{(i_0,\emptyset)\} )$.
%We call an addable edge $e$ {\bf unblocked} if $\block(e) = \emptyset$.
We call an addable edge $e = \{i\} \cup P$ {\bf unblocked} if it contains at least $r$ unblocked light items: $|P\backslash(\bigcup_{e'\in\block(e)} B^\epsilon(e'))|\geq r$.
Initialize the set of unblocked addable edges be $I=\emptyset$.
Throughout the whole algorithm, we maintain a set $I$ of unblocked addable edges and layers $L_i(X_i, Y_i)$ for all $i\leq l$, where $X_i$ contains blocked addable edges.
Initialize $X_{l+1} = Y_{l+1} = \emptyset$. We build a new layer as follows.

\begin{definition}[Addable]
Given layers $X_{\leq l+1}$ and $Y_{\leq l}$, an edge $e= \{i\} \cup P$ is addable if $|P|=p$ and $P\subseteq B^\epsilon_i\backslash B^\epsilon(X_{\leq l+1}\cup Y_{\leq l})$ such that
%\begin{equation*}\label{eq:include-I}
$f(Y_{\leq l}, X_{\leq l+1}\cup I+ e) > f(Y_{\leq l}, X_{\leq l+1}\cup I)$.
%\end{equation*} 
\end{definition}

Note that such an edge is connected to a blocking edge in $Y_{\leq l}$ by a path in $G(A\cup B^1, E_M)$ that is disjoint from other paths connecting existing blocking edges and addable edges.

Given an addable edge: if it is unblocked, then include it in $I$; otherwise include it in $X_{l+1}$.
When there is no more addable edges, let $Y_{l+1} = \block(X_{l+1}) = \bigcup_{e\in X_{l+1}} \block(e)$, set $l=l+1$ and try to {\bf contract} $L_{l}$.
Note that it is possible that a blocking edge blocks multiple addable edges.

%{\color{blue} Blocking edges are added after all addable are introduced. Hence no superposed-build. If we add blocking edges immediately after each addable edge then each blocking edge has only one addable edge that it blocks, which helps us to replace $\frac{p-r+1}{r}$ with $\lceil \frac{p-r+1}{r} \rceil$.
%In contraction of $Y_t$, we need to super-pose build $X_t$, which can possibly increase the size of $Y_t$. Hence in this case we need to guarantee that the increase of $X_t$ in super-pose build is at least $1+\mu^2$ and the signature need to include both $X_t$ and $Y_t$, i.e., $(\ldots, -\log |X_t|, \log |Y_t| ,\ldots)$ (refer to SODA'16).}

\subsection{Collapse Phase}

Let $W = F(Y_{\leq l}, I)$ be constructed as follows.
Initialize $W = \emptyset=F(Y_{\leq 0}, I)$. Recursively for $i = 1,2,\ldots,l$, let $W = F(Y_{\leq i}, I)$ be augmented from $W=F(Y_{\leq i-1}, I)$.
In the final $W$, let $W_i\subseteq W$ be the paths from $A(Y_i)$ to $A(I)$ and let $I_i\subseteq I$ be those reached by $W_i$.
By the above construction, if $f\in Y_{\leq i}$ have no out-flow in $F(Y_{\leq i}, I)$, then it will not have out-flow in $F(Y_{\leq j}, I)$, for any $j>i$.
Hence we have for all $i = 1,2,\ldots,l$, $|W_i| = |I_i|$ and $|W_{\leq i}| = |I_{\leq i}| = f(Y_{\leq i}, I)= f(Y_{\leq i}, I_{\leq i})$.

Note that every path in $W_i$ starts with an agent $u\in A(Y_i)$ that is assigned a light edge by $M$ and ends at a agent $v\in A(I_i)$ with an unblocked addable edge, which provides a possibility of swapping out a blocking edge in the tree with an unblocked addable edge (by reassigning all heavy items in the path).

\begin{definition}[Collapsible]
	We call layer $L_i$ collapsible if $|I_i|\geq \mu |Y_i|$.
\end{definition}

Intuitively, $|I_i|\geq \mu |Y_i|$ implies that we can swap out a $\mu$ fraction of blocking edges in $Y_i$ (which is called a collapse).
Let $L_t$ be the earliest collapsible layer, we collapse it as follows.

\noindent
{\bf Step-(1).}
%Let $W' = F(Y_{\leq t-1}, X_t\cup I_{\leq t-1})$ be augmented from $W_{\leq t-1}$. Note that we have
%\begin{compactitem}
%	\item all $I_{\leq t-1}$ are reached by $W'$.
%	\item $|W'| = f(Y_{\leq t-1}, X_t\cup I)$ (all $I_{\geq t}$ are not reachable after $W_{\leq t-1}$).
%	\item $W'$ and $W_t$ are node-disjoint (all nodes in $W_t$ are not reachable, otherwise $I_t$ reachable).
%\end{compactitem}
%
For each path $P(u,v)$ in $W_t$ from $e_1 :=\{u\} \cup R \in Y_t$ to $e_2 := \{v\} \cup P \in I_t$:
\begin{enumerate}
	\item $M = M - e_1 + e'$, swap out blocking edge $e_1$ with $e' := \{v\} \cup P'$, where $|P'|=r$ and $P'\subseteq P\backslash \bigcup_{e\in\block(e_2)}B^\epsilon(e)$,
	\item reverse all heavy edges in $P(u,v)$: $M = M \cup \{ \{i,j\}: (i,j)\in P(u,v)\cap(A\times B) \}\backslash \{ \{i',j'\}: (j',i')\in P(u,v)\cap(B\times A) \} $.
\end{enumerate}

Note that after the above operations, only $Y_t$ and $M$ are changed: the size $|Y_t|$ is reduced by a factor of at least $\mu$ and the number of heavy edges in $M$ is not changed.
%	\item all paths in $W'$ are not touched by disjunction with $W_t$.

\noindent
{\bf Step-(2).}
Set $I = I_{\leq t-1}$. Note that $|W_{\leq t-1}| = f(Y_{\leq t-1}, I) = f(Y_{\leq t-1}, I_{\leq t-1})$ still holds.

\noindent
{\bf Step-(3).}
Set $l=t$ and repeat the collapse if possible. Remove all unblocked edges in $X_t$ (since $|Y_{t}|$ decreases).
%If it is reached by $W'$, then include it in $I$ since each such unblocked edge $(i,P)$ is addable:
For each removed unblocked edge $e$, include it in $I$ if
%\begin{equation*}
$f(Y_{\leq t-1}, X_{\leq t}\cup I + e)  >  f(Y_{\leq t-1}, X_{\leq t}\cup I)$.
%\end{equation*} 

\subsection{Invariants and Properties}

\begin{fact}[Key Invariant]\label{fact:invariant}
	Since the construction of $L_t$ (until $L_{t-1}$ is collapsed), $f(Y_{\leq t-1},X_{\leq t}\cup I)$ does not decrease and is always no less than $|X_{\leq t}|$.
\end{fact}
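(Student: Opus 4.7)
My plan is to fix the layer index $t$ and track the pair $(\Phi_t, |X_{\leq t}|)$, where $\Phi_t := f(Y_{\leq t-1}, X_{\leq t}\cup I)$, through every operation the algorithm performs during the stated interval. The strategy is to verify, event by event, that no operation decreases $\Phi_t$ and that the slack $\Phi_t - |X_{\leq t}| \geq 0$ is maintained.

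I will first establish the invariant at the moment $L_t$ is initialized, when $X_t = Y_t = \emptyset$, so that $|X_{\leq t}| = |X_{\leq t-1}|$. For $t=1$ the claim reduces to $\Phi_1 = f(\{(i_0,\emptyset)\},\emptyset) = 0 \geq 0$. For $t \geq 2$, I will apply the invariant for $L_{t-1}$ by induction on $t$, at the moment just before $L_{t-1}$ finishes, obtaining $f(Y_{\leq t-2}, X_{\leq t-1}\cup I) \geq |X_{\leq t-1}|$; since enlarging the source set from $Y_{\leq t-2}$ to $Y_{\leq t-1}$ (by including the freshly-formed $Y_{t-1}$) can only weakly increase the max-flow, this gives $\Phi_t \geq |X_{\leq t-1}| = |X_{\leq t}|$ at the start of $L_t$'s lifetime.

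Next I will handle the two non-collapse event types that can occur inside the interval. First, adding an addable edge $e$ while building some layer $L_{t'}$ with $t' \geq t$: by definition of addable, $e$ strictly increases $f(Y_{\leq t'-1},\cdot)$, and since $Y_{\leq t-1} \subseteq Y_{\leq t'-1}$ and max-flow is monotone non-decreasing in the destination set, $\Phi_t = f(Y_{\leq t-1},\cdot)$ is also non-decreasing; the counter $|X_{\leq t}|$ grows by exactly one when $t' = t$ and $e$ is blocked (in which case $\Phi_t$ grows by at least one by strict addability), and grows by zero otherwise, so the slack is preserved. Second, forming $Y_{t'} := \block(X_{t'})$ for $t' \geq t$ touches neither $Y_{\leq t-1}$, $X_{\leq t}$, nor $I$, so both tracked quantities are unchanged.

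The main obstacle, and the part I expect to take the most care, is the collapse event: when some layer $L_c$ with $c \geq t$ is collapsed, the matching $M$ (and hence the digraph $G(A\cup B^1, E_M)$) changes, $I$ is reset to $I_{\leq c-1}$, and $X_c$ loses its unblocked edges (some of which Step-(3) re-admits to $I$). The key structural observation I plan to use is that the paths $W_{\leq c-1}$ witnessing the value $f(Y_{\leq c-1}, I_{\leq c-1})$ are by construction node-disjoint from the paths $W_c$ whose heavy edges are reversed in Step-(1), so after reversal those $|I_{\leq c-1}|$ paths remain intact in the new $G$; combined with the fact that for $c > t$ the sets $Y_{\leq t-1}$ and $X_{\leq t}$ are untouched, this will give $\Phi_t^{\text{new}} \geq \Phi_t^{\text{old}}$ already before Step-(3) acts, and Step-(3) only adds edges that strictly grow $\Phi_t$. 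For the more delicate sub-case $c=t$, the same disjointness argument applied to $W_{\leq t-1}$ bounds the drop in $\Phi_t$ by the drop in $|X_{\leq t}|$ (both governed by the number of unblocked edges expelled from $X_t$ that fail to re-enter $I$), which preserves the slack and completes the event analysis.
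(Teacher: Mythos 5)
Your proposal follows the same skeleton as the paper's proof — induction on $t$, with the inductive step establishing the invariant at the moment $L_t$ is created from the invariant for $L_{t-1}$, and then an event-by-event tracking of $\Phi_t := f(Y_{\leq t-1},X_{\leq t}\cup I)$ through additions of addable edges, formation of $Y$-layers, and collapses. Your treatment of the build-phase events (addable edges and $Y$-formation) is correct and fills in detail the paper leaves implicit.

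The gap is in the collapse case $c>t$. You claim that because the paths in $W_{\leq c-1}$ are node-disjoint from $W_c$ (whose heavy edges get reversed), ``$\Phi_t^{\mathrm{new}}\geq\Phi_t^{\mathrm{old}}$ already before Step-(3) acts.'' That disjointness does preserve the quantity $f(Y_{\leq c-1}, I_{\leq c-1})$ — which is exactly what the paper's Step-(2) note records — but it does not directly bound $\Phi_t = f(Y_{\leq t-1}, X_{\leq t}\cup I)$ from below. After Step-(2), $I$ is shrunk to $I_{\leq c-1}$; deleting sinks is a monotone \emph{decrease} for a max-flow, and a priori a max-flow witnessing $\Phi_t^{\mathrm{old}}$ could have routed some of its $I$-bound paths to sinks in $I_c\cup\cdots\cup I_l$ (the flows $F(Y_{\leq t-1}, X_{\leq t}\cup I)$ and $F(Y_{\leq l}, I)$ are different flow computations and need not agree on which $I$-vertices they hit). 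So the inference ``$W_{\leq c-1}$ survives, therefore $\Phi_t$ does not drop'' is not available without an additional rerouting argument showing that the $I$-part of a maximum $Y_{\leq t-1}\to X_{\leq t}\cup I$ flow can always be steered into $I_{\leq t-1}$. The same concern applies to your $c=t$ sub-case, where you only argue the slack $\Phi_t - |X_{\leq t}|$ is preserved; the Fact additionally asserts $\Phi_t$ is \emph{non-decreasing}, which ``bounding the drop by the drop in $|X_{\leq t}|$'' does not establish. I note the paper's own proof is quite terse here (it says the decrease in $|X_t\cup I|$ ``must be because $Y_t$ is collapsed'' and appeals to the Step-(3) update rule), so your proposal is essentially at the same level of rigor as the source, but the missing rerouting argument is a genuine hole in the proposal as written, not merely an omitted routine step.
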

\begin{proof}
	We prove by induction on $t\geq 1$. Consider the base case when $t=1$.
	The statement trivially holds when $L_t$ is just constructed and when $|X_t\cup I|$ increases.
	Suppose in some iteration $|X_t\cup I|$ decreases, then it must be because $Y_t$ is collapsed, in which case $f(Y_{\leq t-1}, X_{\leq t}\cup I)$ does not change due to the update rule of step-(3).
	
	Now assume the statement is true for $t$ and consider $t+1$.
	
	When $L_{t+1}$ is built we have $f(Y_{\leq t}, X_{\leq t+1}\cup I)\geq f(Y_{\leq t-1}, X_{\leq t}\cup I\cup X_{t+1}) = f(Y_{\leq t-1}, X_{\leq t}\cup I) + |X_{t+1}| \geq |X_{\leq t+1}|$.
	Since $|X_i|$ does not increase afterwards for all $i\leq t+1$, applying the same argument to $L_{t+1}$ as above yields the fact.
\end{proof}
%Let $d_i = f(Y_{\leq i-1}, X_{\leq i}\cup I)$ right after $L_i$'s construction. Then we have the following invariants at the beginning of each iteration:
%\begin{enumerate}
%	\item $f(Y_{\leq l}, I) = I$ since it increases when $|I|$ increases; remains tight when collapsed.
%	\item $f(Y_{\leq i-1}, X_{\leq i}\cup I)\geq d_i$ for all $i\in[l]$ since $|I|$ increases unless collapsed; $f(Y_{\leq t-1}, X_{\leq t}\cup I)$ does not decrease when $|I|$ and $|X_t|$ are decreased in collapse phase.
%	\item {\color{red} IMPORTANT} $d_i\geq |X_{\leq i}|$ since when $L_{l+1}$ is built we have $d_{l+1}=f(Y_{\leq l}, X_{\leq l+1}\cup I)\geq f(Y_{\leq l-1}, X_{\leq l}\cup I\cup X_{l+1})\geq d_l + |X_l|\geq |X_{\leq l}|$ and $|X_i|$ does not increase afterwards.
%\end{enumerate}

\begin{lemma}[Exponential Growth]\label{lemma:exp-growth}
Let $r=\max\{\lceil \frac{k}{9} \rceil, \lceil\frac{k-10}{3+2\sqrt{2}}\rceil\}$, if $T\leq \opt$, then for all $i\in[l]$ we have $|Y_i|\geq \mu^2|Y_{\leq i-1}|$, which implies $l = O(\frac{1}{\mu^2} \log n)$.
\end{lemma}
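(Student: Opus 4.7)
The plan is to adapt the counting argument from the integrality-gap proof of Section~\ref{sec:Integrality-gap} to the flow-network setting of Section~\ref{sec:poly-approx}, combining the key invariant Fact~\ref{fact:invariant} with the non-collapsibility condition $|I_j| < \mu |Y_j|$ that holds at the moment the lemma is invoked. I would argue by contradiction: fix $i \in [l]$ and assume $|Y_i| < \mu^2 |Y_{\leq i-1}|$, aiming to derive a contradiction to $T \leq \opt$.

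First, I would identify a suitable ``reachable'' set $R$ of agents. Consider the state right after layer $L_i$ has been finalized (no more addable edges with respect to $X_{\leq i}$ and $Y_{\leq i-1}$), and define $R$ to be the set of agents reachable from $A(Y_{\leq i-1})$ in the residual of $G(A \cup B^1, E_M)$ after routing the current flow $F(Y_{\leq i-1}, X_{\leq i} \cup I_{\leq i-1})$. The ``no more addable edge'' stopping condition then implies that every $u \in R$ satisfies $|B^\epsilon_u \setminus B^\epsilon(X_{\leq i} \cup Y_{\leq i-1})| < p$; otherwise, any $p$ fresh light items of $u$ together with the residual-reachability of $u$ would yield an addable edge.

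Next I would invoke $T \leq \opt$. Partition $R$ into the heavy-case $N_H$ (agents whose optimal allocation contains a heavy item) and the light-case $N_L = R \setminus N_H$ (agents whose optimal allocation uses at least $k$ light items). The crucial step is to show $|N_L| \geq |Y_{\leq i-1}|$: for each $u \in N_H$ with optimal heavy item $j^*_u$, maximality of $M$'s heavy sub-matching together with $u \in R$ forces $v_u := M(j^*_u)$ to be a heavy-matched agent that lies in $R$ (via the path $u \to j^*_u \to v_u$ in $G$), and the $v_u$ are distinct because $\sigma^*$ is an allocation. Since $A(Y_{\leq i-1}) \subseteq R$ consists solely of light-matched agents (their blocking edges in $M$ are light), these contributions are disjoint, giving $|R| \geq |N_H| + |Y_{\leq i-1}|$ and hence $|N_L| \geq |Y_{\leq i-1}|$. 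Each agent in $N_L$ then forces at least $k - p + 1$ of its optimal, disjoint light items into $B^\epsilon(X_{\leq i} \cup Y_{\leq i-1})$, yielding a lower bound of $(k - p + 1)|Y_{\leq i-1}|$ on the light items in the tree.

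Finally I would compare with the upper bound $|B^\epsilon(X_{\leq i} \cup Y_{\leq i-1})| \leq (r-1)|X_{\leq i}| + r|Y_{\leq i-1}|$, where each blocked addable edge in $X_{\leq i}$ contributes at most $r-1$ items outside its blockers, and each blocking edge has $r$ light items. Using Fact~\ref{fact:invariant} together with the non-collapsibility $|I_{\leq i-1}| \leq \mu|Y_{\leq i-1}|$ to obtain $|X_{\leq i}| \leq (1-\mu)|Y_{\leq i-1}|$, and substituting the assumed $|Y_i| < \mu^2 |Y_{\leq i-1}|$ wherever $Y_i$-related slack enters (e.g., when enlarging the tree to $B^\epsilon(X_{\leq i} \cup Y_{\leq i})$ to account for $Y_i$), the parameter choice $r = \max\{\lceil k/9\rceil, \lceil (k-10)/(3+2\sqrt{2})\rceil\}$ with an appropriate $p \in (r,k)$ makes the balance tight enough that the combined inequality contradicts $|Y_i| < \mu^2 |Y_{\leq i-1}|$. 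The main obstacle I anticipate is a clean treatment of residual augmenting paths in $G$, in particular the subtle case where $(u, j^*_u)$ is already saturated by the current flow $F$, which requires either a rerouting argument or an alternative characterization of $R$; a secondary challenge is calibrating $r$ and $p$ so that the ratio $k/r$ approaches $3 + 2\sqrt{2} = (1+\sqrt{2})^2$ as $\epsilon \to 0$, which arises quadratically from balancing the $p$ and $r$ terms in the upper bound against the $k - p + 1$ term in the lower bound.
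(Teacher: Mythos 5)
Your proposal follows the same high-level skeleton as the paper's proof (contradiction, residual reachability after routing the flow $F(Y_{\leq i-1}, X_{\leq i}\cup I)$, light-item counting in the tree), but it deviates in two places where I believe the argument breaks.

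First, the paper does \emph{not} prove $|N_L| \geq |Y_{\leq i-1}|$; it only proves $|N_L| > (1-\gamma)|Y_{\leq i-1}|$ with $\gamma = \tfrac{r}{p-r+1} + 2\mu$. Getting there requires a preliminary step you are missing: one first shows, from $(p-r+1)|X_{\leq t}| \leq r|Y_{\leq t}|$, the non-collapsibility bound $|I_{\leq t}| < \mu|Y_{\leq t}|$, and the contradiction hypothesis $|Y_t| < \mu^2|Y_{\leq t-1}|$, that $f(Y_{\leq t-1}, X_{\leq t}\cup I) < \gamma|Y_{\leq t-1}|$. Only then does one know that strictly more than $(1-\gamma)|Y_{\leq t-1}|$ agents in $A(Y_{\leq t-1})$ are \emph{unsaturated} sources in the residual, and it is exactly these unsaturated sources that serve as the light-matched witnesses in the reachable set. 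Your attempt to get the stronger bound by matching each $u \in N_H$ to $v_u := M(j^*_u)$ via the path $u \to j^*_u \to v_u$ has the flaw you yourself flagged: when $(u, j^*_u)$ is saturated this path does not exist in the residual, so $v_u$ need not lie in $R$. This is not a cosmetic issue that rerouting patches easily; it is precisely the slack that the factor $(1-\gamma)$ absorbs in the paper's argument, and the parameter calibration ($3+2\sqrt{2}$) is tuned to $(1-\gamma)$, not to $1$. If your $|N_L| \geq |Y_{\leq i-1}|$ held, the same calculus would give a ratio approaching $5$, not $3+2\sqrt{2}$, which is a red flag.

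Second, the bound $|X_{\leq i}| \leq (1-\mu)|Y_{\leq i-1}|$ does not follow from Fact~\ref{fact:invariant} plus non-collapsibility. Fact~\ref{fact:invariant} only gives $|X_{\leq i}| \leq f(Y_{\leq i-1}, X_{\leq i}\cup I) \leq |Y_{\leq i-1}|$, which has no $\mu$ in it. The tighter bound the paper actually uses is $|X_{\leq i}| \leq f < \gamma|Y_{\leq i-1}|$, coming from the same preliminary step described above, together with the structural inequality $(p-r+1)|X_{\leq t}| \leq r|Y_{\leq t}|$ (each blocked addable edge has at least $p-r+1$ blocked items, all of which live inside blocking edges). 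Without that structural inequality your upper bound on $|B^\epsilon(\cdot)|$ is off by roughly a factor of two in the $|X|$-term, which is material when balancing against $(k-p+1)$. In short: you have the right overall strategy (residual reachability + item counting + a contradiction in the final inequality), but you are missing the intermediate lemma $f < \gamma|Y_{\leq i-1}|$, and both of your replacements for the quantities that lemma controls are incorrect.
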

\begin{proof}
	First note that we have $(p-r+1)|X_{\leq t}|\leq r|Y_{\leq t}|$ since each edge in $X_{\leq t}$ has at least $p-r+1$ blocked light items.
	Suppose $|Y_t|<\mu^2|Y_{\leq t-1}|$, then we have $f(Y_{\leq t-1},X_{\leq t}\cup I)< (\frac{r}{p-r+1} + 2\mu)|Y_{\leq t-1}|$ since otherwise (the last inequality holds since no collapsible layer):
	\begin{equation*}
	\frac{r}{p-r+1}|Y_{\leq t}|\geq |X_{\leq t}|\geq f(Y_{\leq t-1}, X_{\leq t}\cup I)-|I_{\leq t}|\geq f(Y_{\leq t-1},X_{\leq t}\cup I) - \mu |Y_{\leq t}|,
	\end{equation*}
	which leads to a contradiction (assume $\frac{1}{\mu}\geq \frac{r}{p-r+1} + \mu$):
	\begin{equation*}
	(\frac{r}{p-r+1} + \mu)|Y_t| \geq (\frac{r}{p-r+1} + 2\mu - \frac{r}{p-r+1} - \mu)|Y_{\leq t-1}| \geq \mu|Y_{\leq t-1}|.
	\end{equation*}
	
	Let $\gamma = \frac{r}{p-r+1} + 2\mu$.
	Consider the moment when there is no more addable edge that can be included into $X_{l+1}$ (before adding $Y_{l+1}$).
	Assume $|Y_{l+1}|<\mu^2|Y_{\leq l}|$, then we have $|X_{\leq l+1}|\leq f(Y_{\leq l},X_{\leq l+1}\cup I) < \gamma |Y_{\leq l}|$.
	The current number of light items in the tree is
	\begin{align*}
	|B^\epsilon(Y_{\leq l}\cup X_{\leq l+1}\cup I)| &\leq (r-1)|X_{\leq l+1}| + r|Y_{\leq l+1}| + p|I_{\leq l+1}|\\
	&\leq ((r-1)\gamma + r(1+\mu^2) + \mu(1+\mu^2) p)|Y_{\leq l}|.
	\end{align*}
	
	Consider the residual graph $G'$ of $G(A\cup B^1, E_M)$ with flow $F(Y_{\leq l},X_{\leq l+1}\cup I)$ (obtained by reversing the direction of each path).
	Note that since $f(Y_{\leq l},X_{\leq l+1}\cup I) < \gamma |Y_{\leq l}|$, more than $(1-\gamma)|Y_{\leq l}|$ of $A(Y_{\leq l})$ can reach at least one agent $i\in A$.
	In $G'$, let $T$ be those agents reachable from $A(Y_{\leq l})$.
	For all $i\in T$,  we have $|B^\epsilon_i\backslash B^\epsilon(Y_{\leq l}\cup X_{\leq l+1}\cup I)|\leq p-1$ (otherwise there are addable edges), and for all $j\in B^1_i$, $j$ must be reachable from $A(Y_{\leq l})$ and assigned.
	Hence the total number of heavy items agents in $T$ are interested in is less than $|T|- (1-\gamma)|Y_{\leq l}|$, which means that more than $(1-\gamma)|Y_{\leq l}|$ agents in $T$ are assigned only light items in the optimal solution.
	Note that each such agent is assigned at least $k$ light items and at most $p-1$ of those items are not included in the tree, we have
	\begin{equation*}
	|B^\epsilon(Y_{\leq l}\cup X_{\leq l+1}\cup I)| > (k-p+1)(1-\gamma)|Y_{\leq l}|,
	\end{equation*}
	which implies
	\begin{equation*}
	\frac{r}{p-r+1} = \gamma-2\mu > \frac{k-p-r+1- \mu(2k-(1+\mu^2)p+(2+\mu)r)}{k-p+r}.
	\end{equation*}
	Fix $\mu = 10^{-10}$, the above inequality is always {\bf not} true for all $k\geq 9$, $r=\lceil \frac{k}{9} \rceil$ and $p=3r-1$ by some simple calculation.
	Moreover, as $\epsilon\rightarrow 0$ (which means $k\rightarrow \infty$), we can set $r = \frac{k-10}{3+2\sqrt{2}}$, $p = (2+\sqrt{2})r-1$ such that the above inequality is not true. Hence, we have a contradiction and we claim that we always have $|Y_{l+1}|\geq \mu^2|Y_{\leq l}|$.
\end{proof}

Now we are ready to prove Theorem~\ref{th:poly-9}.

\begin{proofof}{Theorem~\ref{th:poly-9}}
For any $T$ and $k = \lceil \frac{T}{\epsilon} \rceil$, the algorithm tries to compute an $r\epsilon$-allocation, for integer $r$ as large as possible, by enumerating all possible values of $p$ between $r$ and $k$.
For any fixed $r$ and $p$, we try to augment the partial matching $M$ that matches each agent with either a heavy item or $r$ light items.
Hence it suffices to show that the algorithm terminates in polynomial time for augmenting the size of $M$ by one. 
Since each iteration can be done in polynomial time, it suffices to bound the number of iterations by $\poly(n)$.
The approximation ratio will be the maximum of $\frac{k}{r}$, over all $T\leq \opt$.

By Lemma~\ref{lemma:exp-growth} and the definition of collapsible, we know that after each iteration, either (if no collapse) a new layer with $|Y_{l+1}|\geq \mu^2|Y_{\leq l}|$ is constructed, or some $|Y_t|$ is reduced to at most $(1-\mu)|Y_t|$ while $Y_i$ are unchanged, for all $i<t$.
Let $s_i = \lfloor \log_{\frac{1}{1-\mu}}\frac{|Y_i|}{\mu^{2i}} \rfloor$ and $\textbf{s} = (s_1,s_2,\ldots,s_l,\infty)$ be the signature, then we have:
(1) it is lexicographically decreasing across all iterations: if there is no collapse, then some layer is newly constructed and hence $\textbf{s}$ decreases; otherwise let $L_t$ be the last layer that is collapsed and $|Y_t|$ be the size of $Y_t$ before it is collapse: we know that at the end of the iteration $s_i$ is not changed for all $i<t$ while $s_t \leq \lfloor \log_{\frac{1}{1-\mu}}\frac{(1-\mu)|Y_i|}{\mu^{2i}} \rfloor = \lfloor \log_{\frac{1}{1-\mu}}\frac{|Y_i|}{\mu^{2i}} \rfloor - 1$ is decreased by at least one, which also means $\textbf{s}$ decreases;
(2) its coordinates are not decreasing: for all $i\in[l-1]$ we have $s_{i+1} = \lfloor \log_{\frac{1}{1-\mu}}\frac{|Y_{i+1}|}{\mu^{2i+2}} \rfloor \geq \lfloor \log_{\frac{1}{1-\mu}}\frac{|Y_{\leq i}|}{\mu^{2i}} \rfloor \geq s_i$.
Since we have $l = O(\log n)$ and $s_i = O(\log n)$ for all $i\in[l]$, the total number of iterations (signatures) is at most $2^{O(\log n)} = \poly(n)$.

\paragraph{Approximation Ratio.}
When $k\leq 9$, then a trivial $9$-approximation can be achieved by a $\epsilon$-allocation (maximum matching).
By the proof of Lemma~\ref{lemma:exp-growth}, the approximation ratio $\frac{k}{r}$ is always at most $9$ and tends to $3+2\sqrt{2}\approx 5.83$ as $\epsilon\rightarrow 0$.
\end{proofof}

\section{Hardness of $(1,\epsilon)$-Restricted Allocation Problem} \label{sec:hardness}

We show that for any $\epsilon\leq 0.5$, the $(1,\epsilon)$-restricted allocation problem cannot be approximated within any ratio smaller than $2$.

\begin{definition}[$3$-dimensional matching]
	Given a $3$-regular hypergraph $H(X\cup Y\cup Z, E)$ where $|X|=|Y|=|Z|$ and $E\subseteq X\times Y\times Z$, the $3$-dimensional matching problem aims at finding a perfect matching $M\subseteq E$ that matches all nodes.
\end{definition}

\begin{proofof}{Theorem~\ref{th:hardness}}
Deciding the existence of a perfect matching in the $3$-dimensional matching problem is known to be \NP-hard.
Given an instance of the $3$-dimensional matching problem $H(X\cup Y\cup Z, E)$, for any fix $\epsilon\leq 0.5$, we show that there exists an instance $(A,B,w)$ of the $(1,\epsilon)$-restricted allocation problem for which $\opt = 2\epsilon$  if $H$ has a perfect matching; otherwise $\opt \leq \epsilon$.
Hence no polynomial time algorithm can approximate the $(1,\epsilon)$-restricted allocation problem within any ratio smaller than $2$, unless \P=\NP.

Let $d(z)$ be the number of hyperedges adjacent to node $z\in Z$. Define $\hat{Z} = \{ z^{(1)}, z^{(2)},\ldots,$ $z^{(d(z)-1)}:z\in Z \}$ to be the set containing $d(z)-1$ copies of each $z\in Z$. Let $A = E$, $B = X\cup Y\cup \hat{Z}$, $w_j = \epsilon$ for all $j\in X\cup Y$ and $w_j = 1$ for all $j\in \hat{Z}$.
For all $e = (x,y,z)\in A$, let $B_e = \{ x,y,z^{(1)}, z^{(2)},\ldots,z^{(d(z)-1)} \}$.

Since there are $|E|$ agents and $\sum_{z\in Z}(d(z)-1) = |E|-|Z|$ heavy items, at least $|Z|$ agents receive only light items. Since there are $2|Z|$ light items, we have $\opt\leq 2\epsilon$.

{\bf YES case.} If $H$ has a perfect matching $M$, then for each $e = (x,y,z)\in M$, we can assign $x,y\in B_e$ to $e\in A$. For the remaining $|E|-|Z|$ agents, we can assign to each agent one heavy item in $\hat{Z}$ (since exactly $d(z)-1$ edges adjacent to each $z\in Z$ are not assigned). Hence, we have $\opt = 2\epsilon$.

{\bf NO case.} If there is no perfect matching then we show that $\opt<2\epsilon$, which means $\opt \leq \epsilon$.
Assume the contrary that $\opt = 2\epsilon$. Then every $e= (x,y,z)\in A$ must receive either a single heavy item or two light items, which must be $x$ and $y$. Since every $z\in Z$ has only $d(z)-1$ copies, at least one of the adjacent edges of $z$ must receive no heavy item, which means the edges receiving light items having disjoint $Z$ nodes. Hence, those $|Z|$ edges receiving light items actually form a perfect matching and it is a contradiction.
\end{proofof}

While the above analysis implies that the integrality gap of $\clp$ is at least $2$ when $\P \neq \NP$, our following example shows that the integrality gap is, unconditionally, at least $2$.

\paragraph{Lower Bound for the Integrality Gap.}
For the $(1,\epsilon)$-restricted allocation problem instance in
Figure~\ref{fig:IG2} with $4$ agents (circles) and $6$ items (squares), $T^*= 2\epsilon$ while $\opt = \epsilon$ (since at least one light item will become useless after assigning all heavy items), which implies that the integrality gap is at least $2$.

\begin{figure}[htb]
	\centering
	\includegraphics*[width=0.3\textwidth]{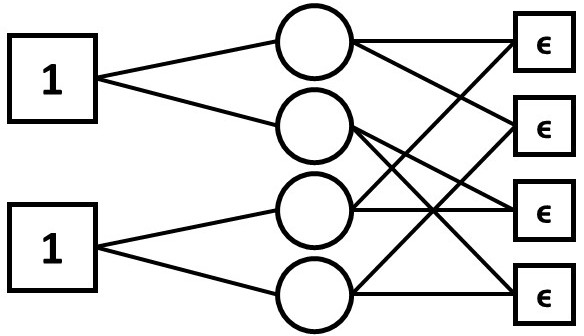}
	\caption{$(1,\epsilon)$-restricted allocation problem instance with integrality gap $2$}
	\label{fig:IG2}
\end{figure}

{
\bibliography{hypergraph}
\bibliographystyle{plain}
}

\end{document}